\documentclass[conference]{IEEEtran}

\usepackage{hyperref}

\usepackage{nameref}
\usepackage{cite}
\def\BibTeX{{\rm B\kern-.05em{\sc i\kern-.025em b}\kern-.08em
		T\kern-.1667em\lower.7ex\hbox{E}\kern-.125emX}}

\usepackage{amsmath,amssymb,amsfonts,amssymb,mathabx}
\usepackage{graphicx}
\usepackage{textcomp}
\usepackage{enumitem} 
\usepackage{float} 
\usepackage{tikz}
\usepackage{subcaption}
\usetikzlibrary{arrows, chains, calc, positioning, shapes.multipart, fit}

\usepackage[ruled, lined, linesnumbered, commentsnumbered, longend]{algorithm2e}
\usepackage{xcolor}
\SetKwProg{Fn}{Function}{ is}{end}

\SetCommentSty{mycommfont}

\usepackage{amsthm}
\newtheorem{theorem}{Theorem}

\newcommand{\under}{\underline}

\begin{document}

\title{bsort: A theoretically efficient non-comparison-based sorting algorithm for integer and floating-point numbers}

\author{
	\IEEEauthorblockN{Benjamín Guzmán\IEEEauthorrefmark{1}}
	\IEEEauthorblockA{
		Independent researcher\\
		Mexico City, Mexico \\
		bg@benjaminguzman.dev\\
		\IEEEauthorrefmark{1}Work performed prior to joining Amazon.
	}
}

\maketitle

\begin{abstract}
This paper presents bsort, a non-comparison-based sorting algorithm for signed and unsigned integers, and floating-point values. The algorithm unifies these cases through an approach derived from binary quicksort, achieving O(wn) runtime asymptotic behavior and O(w) auxiliary space, where w is the element word size. This algorithm is highly efficient for data types with small word sizes, where empirical analysis exhibits performance competitive with highly optimized hybrid algorithms from popular libraries.
\end{abstract}

\begin{IEEEkeywords}
sorting algorithm, bitwise operations, in-place sorting, floating-point values, non-comparison sorting
\end{IEEEkeywords}

\section{Introduction}
Sorting is a fundamental problem in computer science that involves arranging an input sequence $(v_1, v_2, ..., v_n)$ into a monotonically non-decreasing or non-increasing sequence. Algorithms that solve this problem by comparing pairs of elements are known as comparison-based sorts. It is a well-established result that any such algorithm has a worst-case time complexity lower bound of $\Omega (n\ log(n))$~\cite{itb-sorting}.

Non-comparison-based algorithms, such as radix sort, offer a compelling alternative to comparison-based methods, with linear-time performance. In-place variants that operate at the bit-level, such as binary quicksort~\cite{bentley1997fast}, provide a memory-efficient solution but are limited to sorting equally-signed integer values. Extensions to handle sequences with both positive and negative integer elements~\cite{fastbit-radix-sort} and separate out-of-place methods have been extended to handle floating-point values~\cite{radix-revisited}. This paper aims to present a single, unified algorithm that is simultaneously in-place, and capable of sorting signed and unsigned integers, and floating-point values.

\section{Binary quicksort algorithm}

The binary quicksort algorithm is a divide-and-conquer algorithm which splits the array $a = [a_1, a_2, ..., a_n]$ based on the bits of its elements. The process is applied recursively, starting from the most significant bit (MSB).

In a single partitioning step, the array is split into two sub-arrays, $a'$ and $a''$, which satisfy the following invariants:

\begin{enumerate}[label=(\roman*)]
	\item $a_i \wedge m \neq 0\ \forall a_i \in a'$
	\item $a_j \wedge m = 0\ \forall a_j \in a''$
\end{enumerate}

Where $\wedge$ is the bitwise AND operator, and $m$ is a single-bit mask used in the current iteration which is defined by $m = 1 \ll (s - i)$ where $\ll$ is the left-shift bitwise operator, $s \in \mathbb{Z^+}$ is the number of bits every element in $a$ has, and $i \in \mathbb{Z}: 1 \leq i \leq s$ the current iteration, starting at 1. If $a_1$ precedes $a_2$ or vice versa depends on the desired sort direction (ascending or descending). After partitioning on one bit, the algorithm recursively sorts these sub-arrays using the mask for the next significant bit. This continues until all $s$ bits have been processed, at which point the array is fully sorted.

For example, assuming $a = [5, 7, 1, 6, 3, 4, 0]$, the entire tracing of the binary quicksort algorithm is illustrated in Fig.~\ref{fig:binary-quicksort-tracing}:

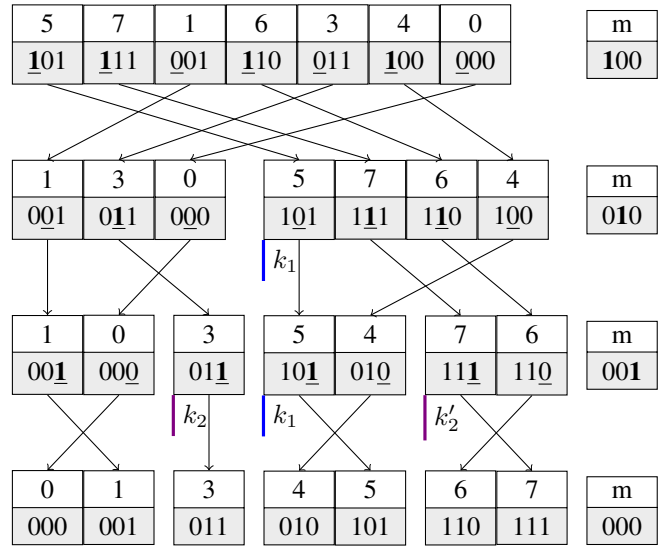
\begin{figure}[H]
	\begin{tikzpicture}[
elem/.style={rectangle split, rectangle split parts=2, 
						draw, align=center, text width=7mm,
						rectangle split part fill={white,gray!15}}
]
\node[elem]                     (01-5) 	{5 \nodepart{two} \under{\textbf{1}}01 };
\node[elem, right=0cm of 01-5]  (01-7)	{7 \nodepart{two} \under{\textbf{1}}11 };
\node[elem, right=0cm of 01-7]  (01-1)  {1 \nodepart{two} \under{0}01 };
\node[elem, right=0cm of 01-1]  (01-6)  {6 \nodepart{two} \under{\textbf{1}}10 };
\node[elem, right=0cm of 01-6]  (01-3)  {3 \nodepart{two} \under{0}11 };
\node[elem, right=0cm of 01-3]  (01-4)  {4 \nodepart{two} \under{\textbf{1}}00 };
\node[elem, right=0cm of 01-4]  (01-0)  {0 \nodepart{two} \under{0}00 };
		
\node[elem]	(01-m)	[right=1cm of 01-0]	{m \nodepart{two} \textbf{1}00 };

\node[elem, below=of 01-5]          (02-1)	{1 \nodepart{two} 0\under{0}1 };
\node[elem, right=0cm of 02-1]	    (02-3)	{3 \nodepart{two} 0\under{\textbf{1}}1 };
\node[elem, right=0cm of 02-3]      (02-0)	{0 \nodepart{two} 0\under{0}0 };
\node[elem, right=0.5cm of 02-0]    (02-5)	{5 \nodepart{two} 1\under{0}1 };
\node[elem, right=0cm of 02-5]	    (02-7)	{7 \nodepart{two} 1\under{\textbf{1}}1 };
\node[elem, right=0cm of 02-7]	    (02-6)	{6 \nodepart{two} 1\under{\textbf{1}}0 };
\node[elem, right=0cm of 02-6]    (02-4)	{4 \nodepart{two} 1\under{0}0 };

\node (k1) [text width=7mm, below=0cm of 02-5] {$k_1$};
\draw[very thick, blue] (k1.south west) |- (k1.north west);

\node[elem, below=1.15cm of 01-m]	(02-m)	{m \nodepart{two} 0\textbf{1}0 };

\draw[->] (01-5.south) -- (02-5.north);
\draw[->] (01-6.south) -- (02-6.north);
\draw[->] (01-1.south) -- (02-1.north);
\draw[->] (01-7.south) -- (02-7.north);
\draw[->] (01-3.south) -- (02-3.north);
\draw[->] (01-4.south) -- (02-4.north);
\draw[->] (01-0.south) -- (02-0.north);

\node[elem, below=of 02-1]          (03-1)	{1 \nodepart{two} 00\under{\textbf{1}} };
\node[elem, right=0cm of 03-1]      (03-0)	{0 \nodepart{two} 00\under{0} };
\node[elem, right=0.25cm of 03-0]   (03-3)	{3 \nodepart{two} 01\under{\textbf{1}} };
\node[elem, below=of 02-5]          (03-5)	{5 \nodepart{two} 10\under{\textbf{1}} };
\node[elem, right=0cm of 03-5]      (03-4)	{4 \nodepart{two} 01\under{0} };
\node[elem, right=0.25cm of 03-4]   (03-7)	{7 \nodepart{two} 11\under{\textbf{1}} };
\node[elem, right=0cm of 03-7]      (03-6)	{6 \nodepart{two} 11\under{0} };

\node (k1) [text width=7mm, below=0cm of 03-5] {$k_1$};
\draw[very thick, blue] (k1.south west) |- (k1.north west);

\node (k2) [text width=7mm, below=0cm of 03-3] {$k_2$};
\draw[very thick, violet] (k2.south west) |- (k2.north west);

\node (k2') [text width=7mm, below=0cm of 03-7] {$k_2'$};
\draw[very thick, violet] (k2'.south west) |- (k2'.north west);

\node[elem, below=1.15cm of 02-m]	(03-m)    {m \nodepart{two} 00\textbf{1} };

\draw[->] (02-5.south) -- (03-5.north);
\draw[->] (02-6.south) -- (03-6.north);
\draw[->] (02-1.south) -- (03-1.north);
\draw[->] (02-7.south) -- (03-7.north);
\draw[->] (02-3.south) -- (03-3.north);
\draw[->] (02-4.south) -- (03-4.north);
\draw[->] (02-0.south) -- (03-0.north);

\node[elem, below=of 03-1]      (04-0)  {0 \nodepart{two} 000 };
\node[elem, right=0cm of 04-0]  (04-1)  {1 \nodepart{two} 001 };
\node[elem, below=of 03-3]      (04-3)  {3 \nodepart{two} 011 };
\node[elem, below=of 03-5]      (04-4)  {4 \nodepart{two} 010 };
\node[elem, right=0cm of 04-4]  (04-5)  {5 \nodepart{two} 101 };
\node[elem, below=of 03-7]      (04-6)  {6 \nodepart{two} 110 };
\node[elem, below=of 03-6]  (04-7)  {7 \nodepart{two} 111 };

\node[elem, below=1.15cm of 03-m]   (04-m)  {m \nodepart{two} 000 };

\draw[->] (03-5.south) -- (04-5.north);
\draw[->] (03-6.south) -- (04-6.north);
\draw[->] (03-1.south) -- (04-1.north);
\draw[->] (03-7.south) -- (04-7.north);
\draw[->] (03-3.south) -- (04-3.north);
\draw[->] (03-4.south) -- (04-4.north);
\draw[->] (03-0.south) -- (04-0.north);
	\end{tikzpicture}
	\caption{Tracing of binary quicksort algorithm}
	\label{fig:binary-quicksort-tracing}
\end{figure}

\textit{bsort} uses a modified version of the \textit{binary quicksort} algorithm as its core sub-procedure, which is described in Algorithm~1, and Algorithm~2 shows how this sub-procedure is used recursively to sort equally-signed integers.

\vfill

\begin{algorithm}
	\caption{bsort sub-procedure to partition an array}
	\SetKwInOut{Input}{Input}
	\SetKwInOut{Output}{Output}
	
	\Fn{singlePassBSort(a, m, ps, pe, asc)}{
		\Input{$a$ - The array of numbers to be sorted; $a_i \in a\ \forall\ i \in \mathbb{Z}: 0 \leq i < |a|$ \newline
			$m$ - the current mask; $\exists k \in \mathbb{N}_0: m = 2^k $\newline
			$ps, pe$ - the indices at which the partition to be sorted starts (inclusive) and ends (exclusive); $ps, pe \in \mathbb{Z}: 0 \leq ps \leq pe < |a|$\newline
			$asc$ - boolean indicating whether the order is ascending
		}
		\Output{An index $k \in [ps, pe]$ such that $a_i \wedge m = 0, a_j \wedge m = m\ \forall i \in [ps, k), j \in [k, pe]$ if and only if $asc$ is true, or $a_i \wedge m = m, a_j \wedge m = 0$ otherwise}
		
		$k := ps$

        \For{$i \leftarrow ps$ \KwTo $pe$}{
            \If{$(asc \land \neg(a_i \wedge m)) \lor (\neg asc \land (a_i \wedge m))$}{
				$swap(a_i, a_k)$

				$k \leftarrow k + 1$
			}
        }

%
%
%
%
		
		\KwRet{k}
	}
	\label{alg:single-pass-bsort}
\end{algorithm}

\begin{algorithm}
	\caption{Binary quicksort for equally-signed integers}
	\SetKwInOut{Input}{Input}
	\SetKwInOut{Output}{Output}
	
	\Fn{binaryQuickSort(a, m, ps, pe, asc)}{
		\Input{Same as $singlePassBSort$}
		\Output{The same array $a$ but with the partition starting at $ps$ and ending at $pe$ (exclusive) now sorted in ascending or descending order depending of $asc$}
		
		$k:=singlePassBSort(a, m, ps, pe, asc)$
		
		$m' := m \gg 1$
		
		\If{$m' = 0$}{
			\KwRet{$a$}
		}
		
		$binaryQuickSort(a, m', ps, k, asc)$
		
		$binaryQuickSort(a, m', k, pe, asc)$
		
		\KwRet{a}
	}
	\label{alg:binary-quicksort}
\end{algorithm}

\pagebreak

\section{bsort}

The bitwise partitioning mechanism of the standard binary quicksort algorithm is predicated on the lexicographical ordering of unsigned integer representations. Consequently, it is not directly applicable to data types such as signed integers or floating-point numbers, which do not adhere to this ordering principle.

\textit{Proof}. Consider an array $a$ containing two signed integers, $a=[x, y]$, where $x < 0$ and $y \ge 0$. Let $w$ be the word size in bits. In the standard two's complement representation, the MSB of $x$ is 1, while the MSB of $y$ is 0.

The algorithm's initial partitioning pass, as defined by $singlePassBSort$, operates on the MSB and partitions the array based on the bitwise AND with the mask $m = 1 \ll (w - 1)$

\begin{enumerate}
	\item For element $x$, $x \wedge m = m$, since its MSB is 1.
	\item For element $y$, $y \wedge m = 0$, since its MSB is 0.
\end{enumerate}

According to the logic in $singlePassBSort$, for an ascending sort, elements yielding a non-zero result are placed in the second partition (the ``greater'' partition), while those yielding zero are placed in the first. Therefore, the partitioning step will incorrectly place $x$ after $y$. This violates the monotonicity invariant required for a sorted sequence, proving the algorithm fails for mixed-sign integers.~$\blacksquare$

This same logic extends to floating-point numbers. Under the IEEE-754 standard, the MSB is also a sign bit. A similar bitwise comparison will incorrectly order negative and positive values for the same reason outlined above. Furthermore, the lexicographical order of the remaining bits does not correspond directly to the numerical magnitude of floating-point values, creating additional sorting failures even within partitions of the same sign.

\subsection{bsort for signed integers}
\label{subsection:bsort for signed integers}

The preceding analysis demonstrates that the initial call to $singlePassBSort$ for signed numbers inverts the relative order of the negative and non-negative partitions. For example, in an ascending sort, this erroneously places all negative numbers after all non-negative numbers. To rectify this, we simply invert the sort direction for the initial pass of the algorithm. Algorithm 3 formalizes this procedure.

Throughout this paper, $a \gg b$ and $a \ggg b$  denote arithmetic and logical right-shift operations, respectively, e.g., given an 8-bit integer $x = -128_{10} = 1000\_0000_2$, then $x \gg 1 = -64_{10} = 1100\_0000_2$ and $x \ggg 1 = 64_{10} = 0100\_0000_2$. The function $sizeof(x)$ returns the bit-width of the operand $x$.

\begin{algorithm}
	\caption{bsort for signed integers}
	\SetKwInOut{Input}{Input}
	\SetKwInOut{Output}{Output}
	
	\Fn{bsortSigned(a, asc)}{
		\Input{$a$ - an array of signed integers\newline
			$asc$ - boolean indicating the sorting order
		}
		\Output{The same array $a$ but sorted}
		
		$m := 1 \ll (sizeof(a_0) - 1)$
		
		$k := singlePassBSort(a, m, 0, |a|, \neg asc)$
		
		$m' := m \ggg 1$
		
		$binaryQuickSort(a, m', 0, k, asc)$
		
		$binaryQuickSort(a, m', k, |a|, asc)$
		
		\KwRet{a}
	}
	\label{alg:bsort-signed}
\end{algorithm}


\textbf{Example for Algorithm 3:}

Let the input data be $a = [-8, -1, 3, -7, 2, 6, 3], asc = false$, then the tracing of $bsortSigned(a, asc)$ is shown in Figure~\ref{fig:bsort-signed-tracing}.

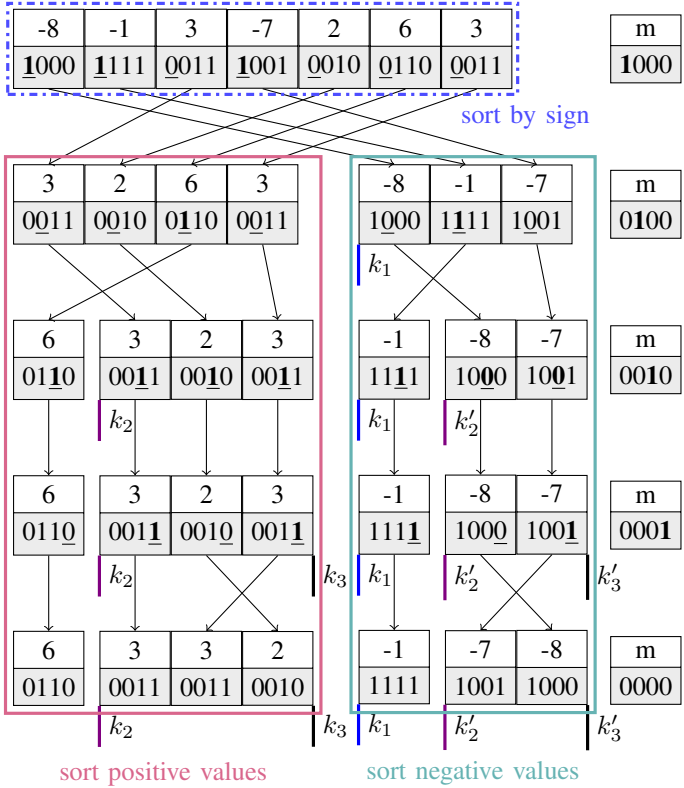
\begin{figure}[!t]
	\vspace{-5pt}
	\begin{tikzpicture}[
elem/.style={rectangle split, rectangle split parts=2, 
	draw, align=center, text width=7mm,
	rectangle split part fill={white,gray!15}}
]
\node[elem]                         (01--8)	    {-8 \nodepart{two} \under{\textbf{1}}000 };
\node[elem, right=0cm of 01--8]     (01--1)		{-1 \nodepart{two} \under{\textbf{1}}111 };
\node[elem, right=0cm of 01--1]     (01-3-1)	{3  \nodepart{two} \under{0}011 };
\node[elem, right=0cm of 01-3-1]    (01--7)		{-7 \nodepart{two} \under{\textbf{1}}001 };
\node[elem, right=0cm of 01--7]     (01-2)		{2  \nodepart{two} \under{0}010 };
\node[elem, right=0cm of 01-2]      (01-6)      {6  \nodepart{two} \under{0}110 };
\node[elem, right=0cm of 01-6]      (01-3-2)	{3  \nodepart{two} \under{0}011 };

\node[elem]	(01-m)	[right=1.3cm of 01-3-2]	{m \nodepart{two} \textbf{1}000 };

\node[elem, below=of 01--8]         (02-3-1)	{3  \nodepart{two} 0\under{0}11 };
\node[elem, right=0cm of 02-3-1]    (02-2)		{2  \nodepart{two} 0\under{0}10 };
\node[elem, right=0cm of 02-2]      (02-6)      {6  \nodepart{two} 0\under{\textbf{1}}10 };
\node[elem, right=0cm of 02-6]      (02-3-2)	{3  \nodepart{two} 0\under{0}11 };
\node[elem, right=0.8cm of 02-3-2]  (02--8)		{-8 \nodepart{two} 1\under{0}00 };
\node[elem, right=0cm of 02--8]     (02--1)	    {-1 \nodepart{two} 1\under{\textbf{1}}11 };
\node[elem, right=0cm of 02--1]     (02--7)		{-7 \nodepart{two} 1\under{0}01 };

\node (k1) [text width=7mm, below=0cm of 02--8] {$k_1$};
\draw[very thick, blue] (k1.south west) |- (k1.north west);

\node[elem]	(02-m)	[below=1.15cm of 01-m]	{m \nodepart{two} 0\textbf{1}00 };

\draw[->] (01-3-1.south) -- (02-3-1.north);
\draw[->] (01--7.south)  -- (02--7.north);
\draw[->] (01-6.south)   -- (02-6.north);
\draw[->] (01--8.south)  -- (02--8.north);
\draw[->] (01-2.south)   -- (02-2.north);
\draw[->] (01-3-2.south) -- (02-3-2.north);
\draw[->] (01--1.south)  -- (02--1.north);

\node[elem, below=of 02-3-1]        (03-6)      {6  \nodepart{two} 01\under{\textbf{1}}0 };
\node[elem, right=0.2cm of 03-6]    (03-3-1)	{3  \nodepart{two} 00\under{\textbf{1}}1 };
\node[elem, right=0cm of 03-3-1]    (03-2)		{2  \nodepart{two} 00\under{\textbf{1}}0 };
\node[elem, right=0cm of 03-2]      (03-3-2)	{3  \nodepart{two} 00\under{\textbf{1}}1 };
\node[elem, below=of 02--8]         (03--1)	    {-1 \nodepart{two} 11\under{\textbf{1}}1 };
\node[elem, right=0.2cm of 03--1]   (03--8)		{-8 \nodepart{two} 10\under{\textbf{0}}0 };
\node[elem, right=0cm of 03--8]     (03--7)		{-7 \nodepart{two} 10\under{\textbf{0}}1 };

\node (k1) [text width=7mm, below=0cm of 03--1] {$k_1$};
\draw[very thick, blue] (k1.south west) |- (k1.north west);

\node (k2) [text width=7mm, below=0cm of 03-3-1] {$k_2$};
\draw[very thick, violet] (k2.south west) |- (k2.north west);

\node (k2') [text width=7mm, below=0cm of 03--8] {$k_2'$};
\draw[very thick, violet] (k2'.south west) |- (k2'.north west);

\node[elem]	(03-m)	[below=1.15cm of 02-m]	{m \nodepart{two} 00\textbf{1}0 };

\draw[->] (02-3-1.south) -- (03-3-1.north);
\draw[->] (02--7.south)  -- (03--7.north);
\draw[->] (02-6.south)   -- (03-6.north);
\draw[->] (02--8.south)  -- (03--8.north);
\draw[->] (02-2.south)   -- (03-2.north);
\draw[->] (02-3-2.south) -- (03-3-2.north);
\draw[->] (02--1.south)  -- (03--1.north);

\node[elem, below=of 03-6]          (04-6)      {6  \nodepart{two} 011\under{0} };
\node[elem, right=0.2cm of 04-6]    (04-3-1)	{3  \nodepart{two} 001\under{\textbf{1}} };
\node[elem, right=0cm of 04-3-1]    (04-2)		{2  \nodepart{two} 001\under{0} };
\node[elem, right=0cm of 04-2]      (04-3-2)	{3  \nodepart{two} 001\under{\textbf{1}} };
\node[elem, below=of 03--1]         (04--1)	    {-1 \nodepart{two} 111\under{\textbf{1}} };
\node[elem, right=0.2cm of 04--1]   (04--8)		{-8 \nodepart{two} 100\under{0} };
\node[elem, right=0cm of 04--8]     (04--7)		{-7 \nodepart{two} 100\under{\textbf{1}} };

\node (k1) [text width=7mm, below=0cm of 04--1] {$k_1$};
\draw[very thick, blue] (k1.south west) |- (k1.north west);

\node (k2) [text width=7mm, below=0cm of 04-3-1] {$k_2$};
\draw[very thick, violet] (k2.south west) |- (k2.north west);

\node (k2') [text width=7mm, below=0cm of 04--8] {$k_2'$};
\draw[very thick, violet] (k2'.south west) |- (k2'.north west);

\node (k3) [text width=7mm, below right=0cm of 04-3-2] {$k_3$};
\draw[very thick, black] (k3.south west) |- (k3.north west);

\node (k3') [text width=7mm, below right=0cm of 04--7] {$k_3'$};
\draw[very thick, black] (k3'.south west) |- (k3'.north west);

\node[elem]	(04-m)	[below=1.15cm of 03-m]	{m \nodepart{two} 000\textbf{1} };

\draw[->] (03-3-1.south) -- (04-3-1.north);
\draw[->] (03--7.south)  -- (04--7.north);
\draw[->] (03-6.south)   -- (04-6.north);
\draw[->] (03--8.south)  -- (04--8.north);
\draw[->] (03-2.south)   -- (04-2.north);
\draw[->] (03-3-2.south) -- (04-3-2.north);
\draw[->] (03--1.south)  -- (04--1.north);

\node[elem, below=of 04-6]          (05-6)      {6  \nodepart{two} 0110 };
\node[elem, right=0.2cm of 05-6]    (05-3-1)	{3  \nodepart{two} 0011 };
\node[elem, right=0cm of 05-3-1]    (05-3-2)	{3  \nodepart{two} 0011 };
\node[elem, right=0cm of 05-3-2]    (05-2)		{2  \nodepart{two} 0010 };
\node[elem, below=of 04--1]         (05--1)	    {-1 \nodepart{two} 1111 };
\node[elem, below=of 04--8]         (05--7)		{-7 \nodepart{two} 1001 };
\node[elem, right=0cm of 05--7]     (05--8)		{-8 \nodepart{two} 1000 };

\node (k1) [text width=7mm, below=0cm of 05--1] {$k_1$};
\draw[very thick, blue] (k1.south west) |- (k1.north west);

\node (k2) [text width=7mm, below=0cm of 05-3-1] {$k_2$};
\draw[very thick, violet] (k2.south west) |- (k2.north west);

\node (k2') [text width=7mm, below=0cm of 05--7] {$k_2'$};
\draw[very thick, violet] (k2'.south west) |- (k2'.north west);

\node (k3) [text width=7mm, below right=0cm of 05-2] {$k_3$};
\draw[very thick, black] (k3.south west) |- (k3.north west);

\node (k3') [text width=7mm, below right=0cm of 05--8] {$k_3'$};
\draw[very thick, black] (k3'.south west) |- (k3'.north west);

\node[elem]	(05-m)	[below=1.15cm of 04-m]	{m \nodepart{two} 0000 };

\draw[->] (04-3-1.south) -- (05-3-1.north);
\draw[->] (04--7.south)  -- (05--7.north);
\draw[->] (04-6.south)   -- (05-6.north);
\draw[->] (04--8.south)  -- (05--8.north);
\draw[->] (04-2.south)   -- (05-2.north);
\draw[->] (04-3-2.south) -- (05-3-2.north);
\draw[->] (04--1.south)  -- (05--1.north);

\node[draw=blue!70, very thick, dash dot, inner sep=2pt, label={[label distance=0cm, color=blue!70]-14:sort by sign}, fit=(01--8) (01-3-2)] {};

\node[draw=purple!60, very thick, inner sep=0.1cm, label={[label distance=0.5cm, color=purple!60]-90:sort positive values}, fit=(02-3-1) (05-2)] {};

\node[draw=teal!60, very thick, inner sep=0.1cm, label={[label distance=0.5cm, color=teal!60]-90:sort negative values}, fit=(02--8) (05--8)] {};
	\end{tikzpicture}
	\caption{Tracing of bsort algorithm for signed integers using two's complement notation.}
	\label{fig:bsort-signed-tracing}
	\vspace{-10pt}
\end{figure}

\subsection{bsort for floating-point values}

\begin{theorem}
	\label{theorem:rational-to-integer}
	Let $x$ be a rational number with a finite fractional representation in an integer base $b > 1$. Then $x$ can be expressed in the form:
	\begin{equation}
		x = s \cdot m \cdot b^p
	\end{equation}
	where $s \in \{-1, 1\}$ is the sign, $m \in \mathbb{N}_0$ the mantissa, and $p \in \mathbb{Z}_0^{-}$ the exponent.
\end{theorem}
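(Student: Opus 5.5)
Since $x$ has a finite fractional representation in base $b$, it has a finite positional expansion
\begin{equation*}
	|x| = \sum_{i=-q}^{r} d_i\, b^{i}, \qquad d_i \in \{0,1,\dots,b-1\},
\end{equation*}
for some integers $r \ge 0$ and $q \ge 0$, where $q$ is the number of fractional digits (take $q=0$ if $x$ is an integer). The plan is simply to multiply this expansion by $b^{q}$ to clear the fractional digits. Everything else is bookkeeping about signs and ranges.

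First I fix the sign. Set $s = 1$ if $x \ge 0$ and $s = -1$ if $x < 0$, so that $x = s\,|x|$; for $x = 0$ either choice works and I take $s=1$.

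Next I define the mantissa and the exponent. Put $p := -q$ and
\begin{equation*}
	m := b^{q}|x| = \sum_{i=-q}^{r} d_i\, b^{i+q}.
\end{equation*}
Every exponent $i+q$ is a nonnegative integer and every $d_i$ is a nonnegative integer, so $m \in \mathbb{N}_0$. Since $q \ge 0$, we have $p \le 0$, i.e.\ $p \in \mathbb{Z}_0^{-}$. Then
\begin{equation*}
	s \cdot m \cdot b^{p} = s \cdot b^{q}|x| \cdot b^{-q} = s\,|x| = x,
\end{equation*}
which is the claimed representation.

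The only step needing care is making ``finite fractional representation'' precise: it means exactly that only finitely many digits to the right of the radix point are nonzero. Equivalently, $b^{q}x \in \mathbb{Z}$ for some $q \in \mathbb{N}_0$. Once this definition is stated, the theorem is immediate. I would also remark that the representation is not unique, since one may multiply $m$ by $b$ and decrease $p$ by one. Uniqueness is not claimed, so it needs no further argument.
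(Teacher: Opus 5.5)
Your proposal is correct and follows essentially the same route as the paper: both multiply $|x|$ by $b^{q}$ to clear the $q$ fractional digits, set $p=-q$, and note that the resulting mantissa is a nonnegative integer. The only cosmetic difference is how $m$ is written: the paper splits $|x|$ into an integer part $a$ and a fractional part $c/b^{q}$ and takes $m = a\,b^{q}+c$, whereas you use the full digit expansion $m=\sum_i d_i b^{i+q}$.
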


\begin{proof}
	Any such number $x$ can be decomposed into its sign $s \in \{-1, 1\}$, its integer part $a \in \mathbb{N}_0$, and its fractional part $f \in [0, 1)$. The fractional part $f$ consists of a finite sequence of $q$ digits.
	
	We can express $x$ as $x = s \cdot (a + f)$. Let the integer formed by the $q$ fractional digits be $c$. This is equivalent to $c = f \cdot b^q$, which implies $f = c/b^q$.
	
	Substituting this back into the expression for $x$:
	\[ x = s \cdot \left(a + \frac{c}{b^q}\right) = s \cdot \left(\frac{a \cdot b^q + c}{b^q}\right) = s \cdot (a \cdot b^q + c) \cdot b^{-q} \]
	Since $a, b^q, c \in \mathbb{N}_0$, then $(a \cdot b^q + c) \in \mathbb{N}_0$ by the closure properties of addition and multiplication. Let this integer be $m = a \cdot b^q + c$.
	
	Furthermore, let $p = -q$. Since $q \in \mathbb{N}_0$, its additive inverse $p \in \mathbb{Z}_0^{-}$. Thus, we can write:
	\[ x = s \cdot m \cdot b^p \]
	This completes the proof.
\end{proof}

Examples of Theorem~\ref{theorem:rational-to-integer}:

\begin{itemize}
    \item $112.9_{10} = 1 \cdot 1129_{10} \cdot 10 ^ {-1}$
    \item $-8.348975_{10} = -1 \cdot 8348975_{10} \cdot 10 ^ {-6}$
    \item $111.11_2 = 1 (2 ^ 2 \cdot 111_2 + 11_2) 2 ^ {-2} = 11111_2 \cdot 2 ^ {-2}$
    \item $-101.1_2 = -1 (2 ^ 1 \cdot 101_2 + 1_2) 2 ^ {-1} = -1 \cdot 1011_2 \cdot 2 ^ {-1}$
    \item $4789_{10} = 1 \cdot 4789_{10} \cdot 10 ^ 0$
\end{itemize}

Sorting floating-point values, represented as described in Theorem~\ref{theorem:rational-to-integer}, is accomplished through a sequential, multi-pass sort on the components $s$, $p$, and $m$, in that order.

\begin{enumerate}
	\item Sort by sign, $s$: The first pass sorts the entire array on the sign component, which partitions the data into two primary sections: one for all negative numbers and one for all non-negative numbers.
	\item Sort by exponent, $p$: Next, each of those two sections is sorted independently by the exponent $p$. This creates finer-grained sub-partitions where all elements share the same sign and exponent. To maintain correct numerical order, the sort direction for the exponents of negative numbers must be the inverse of the direction for positive numbers. For example, to sort the final array in ascending order, the exponents of negative numbers must be sorted in descending order.
	\item Sort by mantissa, $m$: Since all elements within a sub-partition already have the same sign and exponent, ordering them by their mantissa correctly places them in their final sorted position as it is the same as sorting unsigned integers. The sort direction also depends on the sign.
\end{enumerate}

This three-pass procedure correctly sorts the entire array by progressively refining the order of the elements. Formal proofs are provided in Theorems~\ref{theorem:order-necessity} and~\ref{theorem:correctness-float}.

\begin{theorem}[Necessity of sort order]
	\label{theorem:order-necessity}
	The correctness of the \textit{bsortFloat} algorithm is dependent on the hierarchical sorting order of sign, then exponent, then mantissa.
\end{theorem}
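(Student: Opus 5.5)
The plan is to read ``the correctness is dependent on the hierarchical order'' as two claims. First, the order sign, then exponent, then mantissa yields a correct sort; this is deferred to Theorem~\ref{theorem:correctness-float}. Second, every other priority order among the three components fails on some input. Since there are $3! = 6$ orderings, I will give, for each of the five alternatives, a small counterexample array whose output under that ordering is not monotone.

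To keep the case analysis short, I will group the alternatives by which key is compared first. The key observation is that $s$, $p$ and $m$ are read as integers and bit fields, not as the numerical value, and this only matters when the leading key is the right one. I will use the normalized IEEE-754 reading of exponent and mantissa, so that larger $p$ means larger magnitude.

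\textbf{Mantissa first.} This covers the orders $(m,s,p)$ and $(m,p,s)$. Take $x = 1.5\cdot 2^{0}$ and $y = 1.0\cdot 2^{3}$. The mantissa of $x$ exceeds that of $y$, so an ascending pass on $m$ places $x$ after $y$. Later passes only refine within equal-mantissa blocks, so this placement is final, yet $x<y$.

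\textbf{Exponent first.} This covers $(p,s,m)$ and $(p,m,s)$. Take $x=-8$ (exponent $3$) and $y=1$ (exponent $0$). Sorting by exponent puts $y$ before $x$, and no later pass can separate them because their exponents differ. Yet $x<y$.

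\textbf{Sign first, then mantissa before exponent.} This is the order $(s,m,p)$. Within the non-negative block, $x=1.5\cdot 2^{0}$ and $y=1.0\cdot 2^{3}$ again come out inverted.

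The general principle behind all three cases is a lemma I will state and prove once and then apply to each case. In a multi-pass MSD-style partition, the relative order of two elements is fixed by the first key on which they differ. Hence correctness requires that the keys, in pass order, form a lexicographic key that is order-isomorphic to the numerical value. The lemma itself follows directly from the way \textit{singlePassBSort} confines later recursion to subarrays $[ps,k)$ and $[k,pe)$.

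The main obstacle is that the statement is informal. ``Dependent on'' must be pinned down as ``no other order is correct for all inputs,'' which is exactly what the counterexamples establish. A secondary obstacle is that Theorem~\ref{theorem:rational-to-integer} uses a non-normalized form with $p\le 0$, where exponents do not reflect magnitude. I would therefore state explicitly that the proof uses the normalized IEEE-754 fields, so that the exponent genuinely dominates the mantissa for same-sign values.
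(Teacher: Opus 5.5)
Your proof is correct and has the same structure as the paper's. The difference is the representation you compute in, and your choice is the better-founded one.

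The paper also splits into ``sign is not the primary key'' and ``sign first, mantissa before exponent.'' It handles mantissa-first and exponent-first together with $[-2.0, 0.0, 0.5]$, and the remaining case with $[0.75, 2.0]$. It relies on your first-differing-key principle implicitly in each case (``no subsequent recursion \ldots can correct its position'') rather than stating it as a lemma.

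The paper builds its counterexamples in the form of Theorem~\ref{theorem:rational-to-integer}. For instance, it uses mantissas $0_2, 1_2, 10_2$ for $0.0, 0.5, -2.0$. Under your normalized IEEE-754 reading, that mantissa-first example would not separate anything, since all three values have zero stored fraction. Your examples therefore really are different ones, built for a different setting. Your objection to Theorem~\ref{theorem:rational-to-integer} is correct: there $p=-q$ counts fractional digits. So $4.25 = 10001_2\cdot 2^{-2}$ precedes $0.5 = 1_2\cdot 2^{-1}$ under an ascending exponent pass. In that framework even sign, exponent, mantissa fails, and ``dependent on'' only makes sense with normalized fields, where the exponent dominates the mantissa.

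One point needs tightening, and the paper's proof shares it. Both arguments assume the first non-sign pass runs in the target (ascending) direction. But \textit{bsortFloat} itself chooses pass directions: it inverts the sign pass and, per the text, inverts the passes on the negative block. If an alternative ordering is granted the same freedom, your two-element witnesses stop failing. A descending exponent pass outputs $[-8, 1]$ correctly, and a descending mantissa pass outputs $[1.5, 8]$ correctly.

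The fix is to use three elements where the first separating bit isolates a value lying strictly between the other two:
\begin{itemize}
\item $\{-8, 1, 8\}$ for exponent-first, where the top exponent bit splits $1$ from $\pm 8$;
\item $\{1, 1.5, 8\}$ for mantissa-first and for $(s,m,p)$, where the top fraction bit splits $1.5$ from $\{1, 8\}$.
\end{itemize}
Whichever side the isolated element goes to, it is misordered against one of the other two, for either target order.

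For the same reason, your corollary that the pass-order key must be ``order-isomorphic to the numerical value'' should be stated up to these per-block direction flips. Even the correct key $(s,p,m)$, read with plain bit order, reverses the sign and everything inside the negative block.
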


\begin{proof}
	We will prove that any hierarchical sorting order other than sign $\rightarrow$ exponent $\rightarrow$ mantissa will fail to produce a correctly sorted array for all inputs. The numerical value of a number is determined first by its sign, then by the magnitude dictated by its exponent, and finally refined by its mantissa. Let's prove this by contradiction by analyzing the two possible ways the required order can be violated.
	
	\textbf{Case 1: The sign component ($s$) is not the primary sort key.}
	Assume the primary sort key is either the exponent ($p$) or the mantissa ($m$). Consider the input array $[x, y, z]~=~[-2.0, 0.0, 0.5]$. Their representations according to Theorem~\ref{theorem:rational-to-integer} are $x = -1 \cdot 10_2 \cdot 2^0$, $y = 1 \cdot 0_2 \cdot 2^0$ and $z = 1 \cdot 1_2 \cdot 2^{-1}$. Suppose we aim to sort the array in ascending order.
	
	If the algorithm sorts by mantissa first, it will incorrectly imply that $y < z < x$ because $m_y < m_z < m_x$ ($0_2~<~1_2~<~10_2$). Because the initial partitioning by mantissa places each element into a singleton partition, the subsequent recursive sorts on each one, will not change their order. The resulting array, $[0.0, 0.5, -2.0]$, fails to satisfy the ascending sort invariant.
	
	If the algorithm sorts by exponent first, it will incorrectly imply that $z < x$ because $p_z < p_x$ ($-1~<~0$). The initial partitioning by exponent yields two partitions, in the following relative order: $[z]$ and $[x, y]$. Since $z$ is isolated in its own partition, no subsequent recursion on mantissa or sign can correct its position relative to $x$. The final output, $[0.5, 0.0, -2.0]$, is also incorrectly sorted.
	
	These cases demonstrate that neither the exponent nor the mantissa maintains global monotonicity across the set of signed numbers. Consequently, the sign bit must serve as the primary sort key; this ensures that the most fundamental numerical boundary (the distinction between negative and non-negative values) is established before the magnitudes of exponents and mantissas are evaluated.

	\textbf{Case 2: The sign is the primary key, but the exponent ($p$) is not the secondary key.}
	This implies the sort order must be sign $\rightarrow$ mantissa $\rightarrow$ exponent. To test this, we only need to consider numbers of the same sign. Let the input array be $[x, y] = [0.75, 2.0]$. Again, suppose we aim to sort the array in ascending order. Their representations are $x = 1 \cdot 11_2 \cdot 2^{-2}$ and $y = 1 \cdot 10_2 \cdot 2^0$.
	
	The algorithm first correctly places both into a positive numbers partition. It then proceeds to sort such partition by mantissa. Since $m_y < m_x$ ($10_2 < 11_2$), the algorithm places $y$ before $x$, resulting in the order $[2.0, 0.75]$. As these elements are now separated in their respective partition, the final sort by exponent will not change their order. The result violates the ascending sort invariant.
	
	The exponent determines a number's order of magnitude, while the mantissa refines its value within that magnitude. An algorithm must therefore evaluate the exponent before the mantissa.
\end{proof}

\begin{theorem}[Sufficiency of sort order]
	\label{theorem:correctness-float}
	The \textit{bsortFloat} algorithm, in conjunction with its sub-procedures, correctly sorts an array of floating-point numbers that adhere to the representation defined in Theorem~\ref{theorem:rational-to-integer}.
\end{theorem}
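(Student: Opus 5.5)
We need to show that if the array is partitioned first by $s$, then by $p$, and finally by $m$, with the directions described above, then every pair of elements $x = s_x m_x b^{p_x}$ and $y = s_y m_y b^{p_y}$ ends up in the correct relative order. We treat the ascending case; the descending case follows by swapping every comparison. The argument is a pairwise one. For two distinct positions $i<j$ in the output, consider the first level of the hierarchy (sign, exponent or mantissa) at which $x=a_i$ and $y=a_j$ were placed in different sub-partitions. After that level they are never moved across each other again, since all later calls of \textit{singlePassBSort} and \textit{binaryQuickSort} operate only within the index range $[ps,pe)$ of a single partition. We then show that the key which first separated them orders them consistently with their numerical values. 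If they were never separated, they agree on $s$, $p$ and $m$, so they are equal and their order is irrelevant.

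\textbf{Step 1 (sign).} By the argument of Subsection~\ref{subsection:bsort for signed integers}, the first pass is run with the inverted direction on the sign bit. This places every element with sign bit $1$ (negative) before every element with sign bit $0$ (non-negative). If $x$ and $y$ are separated here, then $x<0\le y$.

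\textbf{Step 2 (exponent).} Now suppose $s_x=s_y$ but $p_x\ne p_y$. The key fact is a magnitude bound: every element in an exponent class $p$ satisfies $b^{p}\le |v| < b^{p+1}$. This is where I expect the main obstacle. The decomposition in Theorem~\ref{theorem:rational-to-integer} is not unique; for example, $m$ may have trailing zeros, or $m$ may be $0$. Exponent ordering therefore implies magnitude ordering only for a normalized representation, in which $m$ has a fixed number of digits with a nonzero leading digit (as in IEEE-754 normal numbers). I would first impose this normalization explicitly, fixing $b^{t-1}\le m<b^{t}$ for a fixed digit count $t$, and handle zero by giving it the smallest exponent. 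With that in place, $p_x<p_y$ implies $|x|<|y|$. For non-negative values the exponent is sorted ascending, so $x$ precedes $y$ exactly when $x<y$. For negative values the exponent is sorted descending, so larger magnitudes, which are the smaller numbers, come first.

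\textbf{Step 3 (mantissa).} Finally suppose $s_x=s_y$ and $p_x=p_y=p$. Then $|x|-|y|=(m_x-m_y)b^{p}$, so the values are ordered exactly as the mantissas, with the order reversed when $s=-1$. Mantissas are unsigned integers, so \textit{binaryQuickSort} (Algorithm~\ref{alg:binary-quicksort}) sorts them correctly. This holds by induction on the mask: each call splits by the current bit, and lexicographic order on fixed-width unsigned words is numeric order. It is run ascending for $s=1$ and descending for $s=-1$.

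Combining Steps 1--3, every pair of output positions is correctly ordered, so the array is sorted.
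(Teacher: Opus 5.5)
Your argument is correct under the hypotheses you make explicit, but it is organized differently from the paper's. The paper proves the three stages one at a time. It handles the sign pass by reusing the signed-integer argument. It then uses an induction $P_E(k)$ over the exponent bits to show that each same-sign block ends up ordered by the value of its exponent field. Finally it invokes correctness of \textit{binaryQuickSort} on blocks of equal sign and exponent.

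You argue pairwise instead. You take the first key that separates two elements and note that later calls act inside disjoint index ranges, so the pair is never moved back across each other. You then check that this key orders them by numerical value. The paper's induction spells out the bit-level mechanics that you compress into ``the exponent is sorted ascending''; the mask-by-mask reasoning you give for the mantissa covers this, given the encoding caveat below.

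What your route buys is that it exposes a lemma the paper never proves. The paper's part 2 only establishes order by the exponent bits, and its part 3 only treats equal exponents, so nothing connects different exponents to numerical order. Your caution is warranted, because for the representation built in Theorem~1 ($p=-q$) the statement actually fails. The values $3.5=1\cdot 111_2\cdot 2^{-1}$ and $1=1\cdot 1_2\cdot 2^{0}$ come out as $[3.5,1]$. The property you actually use is that magnitude bands are disjoint and ordered by exponent, and this holds for finite IEEE-754 values, subnormals included. Your normalization is therefore a needed correction of the statement rather than a convenience.

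Three things to state or fix:
\begin{enumerate}
\item With $b^{t-1}\le m<b^{t}$ the band is $b^{p+t-1}\le|v|<b^{p+t}$, not $b^{p}\le|v|<b^{p+1}$. Only disjointness and ordering matter, so nothing breaks.
\item You need the exponent field encoded so that its bit order is its numeric order, for example biased as in IEEE-754. A two's-complement field holding $p\le 0$ would defeat the bitwise exponent pass.
\item Like the paper's part 3, you assume the negative block is processed with the reversed direction, as the prose and the floating-point trace do. Algorithm~4 as printed passes $asc$ to both \textit{bsortF} calls. What you (and the paper) actually verify is the variant in which the negative block, $[0,k)$ in your ascending case, receives $\neg asc$.
\end{enumerate}
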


\begin{proof}
	The proof is structured in three parts, mirroring the algorithm's execution flow: (1) the initial sign partition, (2) the recursive sort on the exponent field, and (3) the final sort on the mantissa field.
	
	\textbf{1. Correctness of Sign sort:}
	The logic for partitioning floating-point numbers by sign is identical to the method established for signed integers (Section~\ref{subsection:bsort for signed integers}). As demonstrated in that section, the bitwise order of the sign bit is the inverse of the required numerical order for an ascending sort. A naive ascending pass would incorrectly place non-negative numbers before negative ones.
	
	Therefore, $bsortFloat$ employs the same solution: it inverts the sort direction for the initial pass by calling \textit{singlePassBSort} with \textit{$\neg$asc}. This leverages the exact same mechanism to correctly place all negative numbers before all non-negative numbers, establishing the correct primary order.
	
	\textbf{2. Correctness of Exponent sort:}
	We prove by induction that the recursive calls in \textit{bsortF} correctly sort each same-signed partition by its exponent. Let $w_p$ be the number of bits in the exponent. Let $P_E(k)$ be the inductive hypothesis for the exponent sort:
	
	\textbf{Inductive Hypothesis $P_E(k)$:} After processing the $k$ most significant bits of the exponent (for $1 \le k \le w_p$), the partition is correctly sorted according to the numerical value represented by those $k$ bits.
	
	\textbf{Base Case ($k=1$):} Let the algorithm process only the most significant bit ($b_k$). By definition $singlePassBSort$ partitions the array into two sets: $S_0 = \{x | b_k = 0\}$ and $S_1 = \{x | b_k = 1\}$.
	If the sorting direction is ascending, $S_0$ is placed before $S_1$, and vice versa if descending. In both cases, the relative order of $S_0$ and $S_1$ satisfies the numerical requirement for that single bit. Thus, the array is correctly partitioned with respect to the $k=1$ bit, and $P_E(1)$ holds
	
	\textbf{Inductive Step:} Assume the inductive hypothesis $P_E(k)$ is true for some $k < w_p$. This means the partition is correctly sorted based on the lexicographical value of the first $k$ bits of the exponent. All elements that share an identical $k$-bit prefix are now in the same sub-partition.
	
	The algorithm then processes the $(k+1)$-th bit. Consider any one of the sub-partitions created in the previous step. \textit{singlePassBSort} is called on this sub-partition, dividing it into two smaller groups: a $0$-group where the $(k+1)$-th bit is $0$, and a $1$-group where it is $1$.
	
	The procedure places the $0$-group before the $1$-group (for a standard ascending sort). Let $x$ and $y$ be any element from the $0$-group and $1$-group, respectively. The $(k+1)$-bit prefix of $x$ is now lexicographically smaller than the prefix of $y$. By placing all such $x$ before all such $y$, the algorithm correctly extends the sorted property to the $(k+1)$-th bit.
	
	This demonstrates that if the array is correctly sorted by $k$ bits, the algorithm will correctly sort it by $k+1$ bits. Thus, if $P_E(k)$ is true, then $P_E(k+1)$ is also true.
	
	\textbf{3. Correctness of Mantissa sort:}
	After the exponent sort, \textit{bsortF} calls \textit{binaryQuickSort} on each sub-partition. Within any such sub-partition, all elements have an identical sign and exponent, so their numerical order depends only on their mantissa. Sorting by mantissa is equivalent to sorting unsigned integers, for which \textit{binaryQuickSort} is correct. The sort direction is chosen based on the sign of the partition, ensuring the final order is correct.
	
	\textbf{Conclusion:}
	Since the array is (1) correctly partitioned by sign, (2) correctly sub-partitioned by exponent, and (3) each resulting sub-partition is correctly sorted by mantissa, the \textit{bsortFloat} algorithm is correct.
\end{proof}

Algorithms~4 and~5 illustrate this sorting procedure. It is important to clarify that in these algorithms $sizeof\_exponent(x)$, $sizeof\_mantissa(x)$ and $sizeof(x)$, return the bit-width of the exponent, mantissa and the actual value of $x$, respectively. Consequently, the relationship between these widths is defined as $size\_of(x) = sizeof\_mantissa(x) + sizeof\_exponent(x) +~1$, where the constant term accounts for the sign bit.

It is assumed that the sign, exponent and mantissa are stored in the order in which they were mentioned (so most significant bit corresponds to the sign, and least significant bit is part of the mantissa). However, the algorithm, with some modifications, still works if the arrangement is different.

\begin{algorithm}
    \caption{bsort for floating-point values}
    \SetKwInOut{Input}{Input}
    \SetKwInOut{Output}{Output}
    
    \Fn{bsortFloat(a, asc)}{
        \Input{$a$ - an array of floating-point values\newline
            $asc$ - boolean indicating the sorting order
        }
        \Output{The same array $a$ but sorted}
        
        \tcp{sort by sign}
        $m := 1 \ll (sizeof(a_0) - 1)$
        
        $k := singlePassBSort(a, m, 0, |a|, \neg asc)$
        
        $ $ 
        
        \tcp{sort by exponent and mantissa}
        $m' := m \ggg 1$
        
        $bsortF(a, m', 0, k, asc)$
        
        $bsortF(a, m', k, |a|, asc)$
        
        \KwRet{a}
    }
    \label{alg:bsort-float}
\end{algorithm}

\begin{algorithm}[ht!]
    \caption{bsort for floating-point values, ignoring sign}
    \SetKwInOut{Input}{Input}
    \SetKwInOut{Output}{Output}
    
    \Fn{bsortF(a, m, ps, pe, asc)}{
        \Input{$a$ - The array of numbers to be sorted; $a_i \in a\ \forall\ i \in \mathbb{Z}: 0 \leq i < |a|$ \newline
            $m$ - the current mask; $\exists k \in \mathbb{N}_0: m = 2^k $\newline
            $ps, pe$ - the indices at which the partition to be sorted starts (inclusive) and ends (exclusive); $ps, pe \in \mathbb{Z}: 0 \leq ps \leq pe < |a|$\newline
            $asc$ - boolean indicating if the order is ascending or not
        }
        \Output{An index $k \in [ps, pe]$ such that $a_i \wedge m = 0, a_j \wedge m = m\ \forall i \in [ps, k), j \in [k, pe]$ if and only if $asc$ is true, or $a_i \wedge m = m, a_j \wedge m = 0$ otherwise}
        
        \tcp{sort by exponent}
        $k := singlePassBSort(a, m, ps, pe, asc)$
        
        $ $ 
        
        \tcp{if the current mask is the last bit of the exponent, then, now is time to sort only by the mantissa}
        \If{$m = (1 \ll sizeof\_mantissa(a_0))$}{
            $m' := m \ggg 1$
            
            $binaryQuickSort(a, m', ps, k, asc)$
            
            $binaryQuickSort(a, m', k, pe, asc)$
            
            \KwRet{a}
        }
        
        $m' := m \ggg 1$
        
        $bsortF(a, m', ps, k, asc)$
        
        $bsortF(a, m', k, pe, asc)$
        
        \KwRet{a}
    }
    \label{alg:bsort-f}
\end{algorithm}

\subsection{bsort for IEEE-754-represented floating-point values}

IEEE-754 standard represents floating-point values in a similar manner to Theorem~\ref{theorem:rational-to-integer}. For 32-bit floating-point values, the sign bit $s$ is stored in the left-most bit, the exponent $p$ is stored in the next 8 bits, and lastly the mantissa $m$ is stored in the 23 bits remaining. A similar arrangement is used for 64-bit, 128-bit, or any other word-size floating-point values~\cite{ieee-754}. The general arrangement is illustrated in Figure~\ref{fig:ieee754}

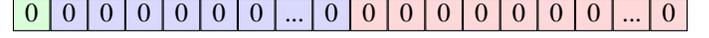
\begin{figure}[H]
    \begin{tikzpicture}
[
    sign/.style={rectangle, draw, align=center, text width=2.5mm, text height=2mm, fill=green!15},
    exponent/.style={rectangle, draw, align=center, text width=2.5mm, text height=2mm, fill=blue!15},
    mantissa/.style={rectangle, draw, align=center, text width=2.5mm, text height=2mm, fill=red!15}
]

\node[sign]                                 (float-sign)    {0};
\node[exponent, right=0cm of float-sign]    (float-exp-1)   {0};
\node[exponent, right=0cm of float-exp-1]   (float-exp-2)   {0};
\node[exponent, right=0cm of float-exp-2]   (float-exp-3)   {0};
\node[exponent, right=0cm of float-exp-3]   (float-exp-4)   {0};
\node[exponent, right=0cm of float-exp-4]   (float-exp-5)   {0};
\node[exponent, right=0cm of float-exp-5]   (float-exp-6)   {0};
\node[exponent, right=0cm of float-exp-6]   (float-exp-7)   {...};
\node[exponent, right=0cm of float-exp-7]   (float-exp-8)   {0};
\node[mantissa, right=0cm of float-exp-8]   (float-man-1)   {0};
\node[mantissa, right=0cm of float-man-1]   (float-man-2)   {0};
\node[mantissa, right=0cm of float-man-2]   (float-man-3)   {0};
\node[mantissa, right=0cm of float-man-3]   (float-man-4)   {0};
\node[mantissa, right=0cm of float-man-4]   (float-man-5)   {0};
\node[mantissa, right=0cm of float-man-5]   (float-man-6)   {0};
\node[mantissa, right=0cm of float-man-6]   (float-man-7)   {0};
\node[mantissa, right=0cm of float-man-7]   (float-man-8)   {...};
\node[mantissa, right=0cm of float-man-8]   (float-man-d)   {0};
    \end{tikzpicture}
    \caption{IEEE-754 floating-point values representation. In green the sign bit, in blue the exponent bits, and in red the mantissa bits.}
    \label{fig:ieee754}
\end{figure}

There are however some major differences between Theorem~\ref{theorem:rational-to-integer} and the IEEE-754 representation because the former has specific representations for $+\infty$, $-\infty$, \texttt{NaN}, $+0$, $-0$, and it also adds a bias equal to $2^{k - 1} - 1$ to the exponent, where $k$ is the number of bits in the exponent, and the mantissa includes an implicit leading bit of 1 for normalized values, which does not affect the lexicographical ordering of the stored bits~\cite{ieee-754}. But, these differences do not affect the correctness of Algorithms~4 and~5, and the proof for that follows directly from the fact that adding a constant offset is a monotonic transformation, which inherently preserves the numerical ordering.

Regarding the special values:

\begin{itemize}
    \item $+\infty$ is represented with the sign bit not set, all the exponent bits set, and none of the mantissa bits set. So, it will be put inside the partition of non-negative numbers with all the exponent bits set, which are either $+\infty$ or \texttt{NaN}, so the position in which $+\infty$ is put in the sorted array will imply that for any element $a_i \notin \{+\infty, NaN\}$ in the array $a$, $a_i < +\infty$ holds.
    \item $-\infty$ is represented with the sign bit set, all the exponent bits set, and none of the mantissa bits set. So, it will be put inside the partition of negative numbers with all the exponent bits set, which are either $-\infty$ or \texttt{NaN}, so the position in which $-\infty$ is put in the sorted array will imply that for any element $a_i \notin \{-\infty, NaN\}$ in the array $a$, $a_i > -\infty$ holds.
    \item \texttt{NaN} is represented with the sign bit set or not set, all the exponent bits set, and at least one of the mantissa bits set. So, it will be put at the extremities of the array, with their relative order determined by the mantissa payload.
    \item $+0$ and $-0$ are represented with the sign bit not set and set, respectively, and with none of the mantissa or exponent bits set. So, $-0$ will be put inside the negative numbers partition, and because there is no exponent or mantissa bit set it will be put at the end of the partition when the order is ascending, and at the beginning when the order is descending. Thus, implying that for any negative element $a_i$, $a_i < -0$ is true.
	Similarly, $a_i > +0$ holds for any positive element $a_i$. This bitwise approach naturally enforces the order $-0 < +0$.
\end{itemize}
\pagebreak

\textbf{Example:}

Let $asc=true$, $a = [1.75, 1.25, -2.5, -\infty]$ and, for sake of brevity and clarity, let's assume that 6 bits are used to store a single floating-point value: 1 for the sign bit, 3 for the exponent (so bias is 3), and 2 for the mantissa; then $a = [001111_2, 001101_2, 110001_2, 111100_2]$, and the tracing of $bsortFloat(a, asc)$ is shown in Figure~\ref{fig:bsort-floating-point-tracing}.

\begin{figure}[!t]
    \begin{tikzpicture}[
elem/.style={rectangle split, rectangle split parts=2, 
    draw, align=center, text width=11mm,
    rectangle split part fill={white,gray!15}}
]
\node[elem]                       (01-a)	{1.75       \nodepart{two} \under{0}01111};
\node[elem, right=0cm of 01-a]    (01-b)	{1.25       \nodepart{two} \under{0}01101};
\node[elem, right=0cm of 01-b]    (01-c)	{-2.5       \nodepart{two} \under{\textbf{1}}10001};
\node[elem, right=0cm of 01-c]    (01-d)	{$-\infty$  \nodepart{two} \under{\textbf{1}}11100 };

\node[elem]	(01-m)	[right=1.5cm of 01-d]	{m \nodepart{two} \textbf{1}00000 };

\node[elem, below=of 01-a]        (02-c)	{-2.5       \nodepart{two} 1\under{\textbf{1}}0001 };
\node[elem, right=0cm of 02-c]    (02-d)	{$-\infty$  \nodepart{two} 1\under{\textbf{1}}1100 };
\node[elem, right=0.75cm of 02-d] (02-a)	{1.75       \nodepart{two} 0\under{0}1111 };
\node[elem, right=0cm of 02-a]    (02-b)	{1.25       \nodepart{two} 0\under{0}1101 };

\node (k1) [text width=11mm, below=0cm of 02-a] {$k_1$};
\draw[very thick, blue] (k1.south west) |- (k1.north west);

\node[elem]	(02-m)	[below=1cm of 01-m]	{m \nodepart{two} 0\textbf{1}0000 };

\draw[->] (01-a.south) -- (02-a.north);
\draw[->] (01-b.south) -- (02-b.north);
\draw[->] (01-c.south) -- (02-c.north);
\draw[->] (01-d.south) -- (02-d.north);

\node[elem, below=of 02-c]        (03-c)	{-2.5       \nodepart{two} 11\under{0}001 };
\node[elem, right=0cm of 03-c]    (03-d)	{$-\infty$  \nodepart{two} 11\under{\textbf{1}}100 };
\node[elem, right=0.75cm of 03-d] (03-a)	{1.75       \nodepart{two} 00\under{\textbf{1}}111 };
\node[elem, right=0cm of 03-a]    (03-b)	{1.25       \nodepart{two} 00\under{\textbf{1}}101 };

\node (k1) [text width=11mm, below=0cm of 03-a] {$k_1$};
\draw[very thick, blue] (k1.south west) |- (k1.north west);

\node[elem]	(03-m)	[below=1.15cm of 02-m]	{m \nodepart{two} 00\textbf{1}000 };

\draw[->] (02-a.south) -- (03-a.north);
\draw[->] (02-b.south) -- (03-b.north);
\draw[->] (02-c.south) -- (03-c.north);
\draw[->] (02-d.south) -- (03-d.north);

\node[elem, below=of 03-c]          (04-d)	{$-\infty$  \nodepart{two} 111\under{\textbf{1}}00 };
\node[elem, right=0.25cm of 04-d]   (04-c)	{-2.5       \nodepart{two} 110\under{0}01 };
\node[elem, right=0.5cm of 04-c]    (04-a)	{1.75       \nodepart{two} 001\under{\textbf{1}}11 };
\node[elem, right=0cm of 04-a]      (04-b)	{1.25       \nodepart{two} 001\under{\textbf{1}}01 };

\node (k1) [text width=11mm, below=0cm of 04-a] {$k_1$};
\draw[very thick, blue] (k1.south west) |- (k1.north west);

\node (k3) [text width=11mm, below=0cm of 04-c] {$k_3$};
\draw[very thick, violet] (k3.south west) |- (k3.north west);

\node[elem]	(04-m)	[below=1.15cm of 03-m]	{m \nodepart{two} 000\textbf{1}00 };

\draw[->] (03-a.south) -- (04-a.north);
\draw[->] (03-b.south) -- (04-b.north);
\draw[->] (03-c.south) -- (04-c.north);
\draw[->] (03-d.south) -- (04-d.north);

\node[elem, below=of 04-d]          (05-d)	{$-\infty$  \nodepart{two} 1111\under{0}0 };
\node[elem, right=0.25cm of 05-d]   (05-c)	{-2.5       \nodepart{two} 1100\under{0}1 };
\node[elem, right=0.5cm of 05-c]    (05-a)	{1.75       \nodepart{two} 0011\under{\textbf{1}}1 };
\node[elem, right=0cm of 05-a]      (05-b)	{1.25       \nodepart{two} 0011\under{0}1 };

\node (k1) [text width=11mm, below=0cm of 05-a] {$k_1$};
\draw[very thick, blue] (k1.south west) |- (k1.north west);

\node (k3) [text width=11mm, below=0cm of 05-c] {$k_3$};
\draw[very thick, violet] (k3.south west) |- (k3.north west);

\node[elem]	(05-m)	[below=1.15cm of 04-m]	{m \nodepart{two} 0000\textbf{1}0 };

\draw[->] (04-a.south) -- (05-a.north);
\draw[->] (04-b.south) -- (05-b.north);
\draw[->] (04-c.south) -- (05-c.north);
\draw[->] (04-d.south) -- (05-d.north);

\node[elem, below=of 05-d]          (06-d)	{$-\infty$  \nodepart{two} 11110\under{0} };
\node[elem, right=0.25cm of 06-d]   (06-c)	{-2.5       \nodepart{two} 11000\under{\textbf{1}} };
\node[elem, right=0.5cm of 06-c]    (06-b)	{1.25       \nodepart{two} 00110\under{1}};
\node[elem, right=0.25cm of 06-b]   (06-a)	{1.75       \nodepart{two} 00111\under{\textbf{1}}};

\node (k1) [text width=11mm, below=0cm of 06-b] {$k_1$};
\draw[very thick, blue] (k1.south west) |- (k1.north west);

\node (k3) [text width=11mm, below=0cm of 06-c] {$k_3$};
\draw[very thick, violet] (k3.south west) |- (k3.north west);

\node (k5) [text width=11mm, below=0cm of 06-a] {$k_5$};
\draw[very thick, violet] (k5.south west) |- (k5.north west);

\node[elem]	(06-m)	[below=1.15cm of 05-m]	{m \nodepart{two} 00000\textbf{1} };

\draw[->] (05-a.south) -- (06-a.north);
\draw[->] (05-b.south) -- (06-b.north);
\draw[->] (05-c.south) -- (06-c.north);
\draw[->] (05-d.south) -- (06-d.north);

\node[draw=blue!70, very thick, dash dot, inner sep=2pt, label={[label distance=0cm, color=blue!70]-14:sort by sign}, fit=(01-a) (01-d)] {};

%

\node[draw=purple!60, very thick, dash dot, inner sep=0.1cm, label={[label distance=0.1cm, color=purple!60]-45:sort by exponent}, fit=(02-c) (04-b)] {};

\node[draw=teal!60, very thick, dash dot, inner sep=0.1cm, label={[label distance=0.1cm, color=teal!60]-30:sort by mantissa}, fit=(05-d) (06-a)] {};
    \end{tikzpicture}
    \caption{Tracing of bsort for floating-point values. For clarity, only relevant partition indexes ($k_1$, $k_3$, $k_5$) were illustrated.}
    \label{fig:bsort-floating-point-tracing}
\end{figure}
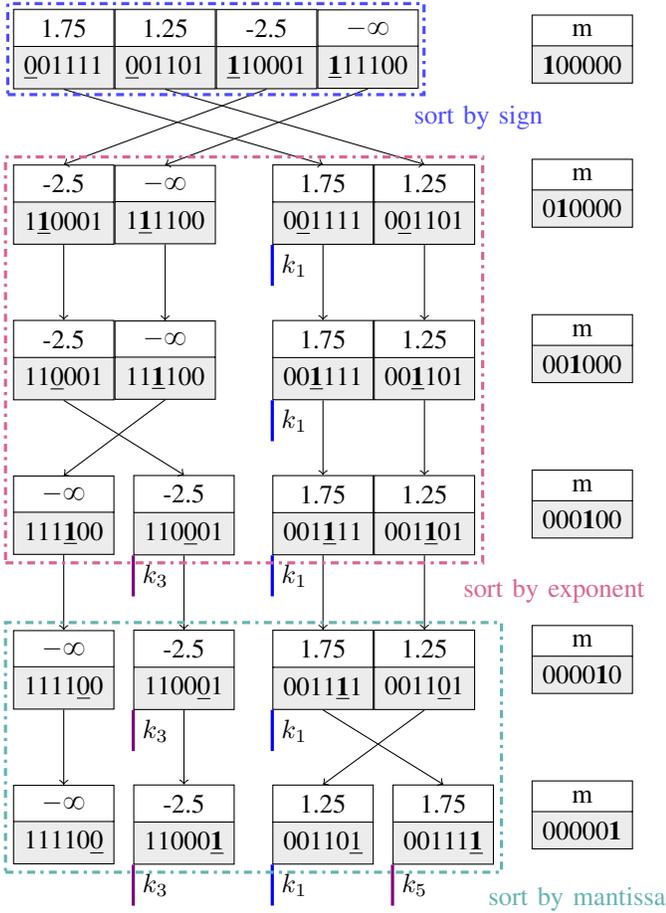

\section{Analysis}

\subsection{Theoretical analysis}

\begin{theorem}
	\label{theorem:bsort-runtime}
	The bsort algorithm sorts an array of $n$ elements with a word size $w$ in $O(nw)$ time.
\end{theorem}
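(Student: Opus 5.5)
The plan is to view every variant of bsort (Algorithms~2--5) as the traversal of a binary recursion tree whose depth is bounded by the word size $w$. We bound the total work on each level of that tree by $O(n)$ and then sum over the at most $w$ levels. First we establish the cost of a single call of $singlePassBSort$ (Algorithm~1) on a partition $[ps, pe)$. Its loop runs once per index in the partition and performs a constant number of bitwise operations, a comparison and at most one swap in each iteration. Its cost is therefore $c\,(pe-ps)+c'$ for constants $c, c'$ that do not depend on $n$ or $w$.

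Next we organize the calls by mask. In $binaryQuickSort$, $bsortSigned$, $bsortFloat$ and $bsortF$, every recursive call receives $m' = m \ggg 1$. So all calls at recursion depth $d$ use the same mask $2^{w-1-d}$. Recursion stops once the mask reaches $0$: in Algorithm~2 through the check $m'=0$, and in Algorithm~5 through the handoff to $binaryQuickSort$, after which the mask keeps shrinking. Hence the depth is at most $w$. The two recursive calls made from a node act on the disjoint ranges $[ps,k)$ and $[k,pe)$, whose union is $[ps,pe)$. By induction on $d$, the partitions handled at depth $d$ are pairwise disjoint sub-ranges of $[0,n)$. The linear part of the cost at depth $d$ is thus $\sum c\,(pe-ps) \le c\,n$. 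Summing over the $w$ levels gives $c\,nw$.

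The main obstacle is the additive constant $c'$ per call. The number of calls at depth $d$ can be as large as $2^d$, including calls on empty partitions, so summing $c'$ naively gives $O(2^w)$ rather than $O(nw)$. We handle this by counting the nodes of the recursion tree more carefully. A call on a non-empty partition has at most two children. A call on an empty partition still makes two further empty calls (Algorithm~2 has no early exit for $ps=pe$), so the empty calls must be charged as well. Each empty call descends from a unique deepest non-empty ancestor, and that ancestor is a call on a non-empty range. At each depth there are at most $n$ non-empty calls, so there are at most $nw$ of them in total. To charge the empty calls, we either (i) assume the implementation's natural guard that returns immediately when $pe-ps\le 1$, under which a call with at most one element is a leaf and every internal node has size at least $2$, giving at most $n$ calls per level and hence $O(nw)$ calls overall; or (ii) note that without the guard the empty subtrees are still bounded only by depth, and make the guard explicit. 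We would commit to (i), stating the guard as part of the algorithm, since it does not affect correctness by Theorem~\ref{theorem:correctness-float} and the argument of Section~\ref{subsection:bsort for signed integers}.

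Finally we combine the pieces. For unsigned integers, Algorithm~2 uses at most $w$ levels. For signed integers, Algorithm~3 adds one top-level pass of cost $O(n)$ to two Algorithm~2 trees on disjoint ranges with $w-1$ bits. For floats, Algorithm~4 adds the sign pass, Algorithm~5 contributes $sizeof\_exponent$ levels, and the handoff to $binaryQuickSort$ contributes $sizeof\_mantissa$ levels. In every case the total number of levels is at most $w = 1 + sizeof\_exponent + sizeof\_mantissa$, each level costs $O(n)$, and so the running time is $O(nw)$.
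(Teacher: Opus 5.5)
Your proposal is correct and follows the paper's skeleton: a linear scan per partition, disjoint partitions whose sizes sum to $n$ at every depth, and at most $w$ depths. The one real difference is that you account for the additive per-call cost, which the paper's proof passes over. The paper bounds each pass by $O(n_p)$ and sums these to $O(n)$ per level, which tacitly assumes there are $O(n)$ calls per level.

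You are right that this assumption is false for the pseudocode as written. Neither $binaryQuickSort$ nor $bsortF$ returns early on an empty or singleton range, so the recursion tree is a complete binary tree with $2^w-1$ calls. The literal running time is therefore $\Theta(nw+2^w)$. The paper's statement holds verbatim only under its closing convention that $w$ is a fixed constant, so that $2^w$ is absorbed into the constant. Your guard on $pe-ps\le 1$ gives at most $n/2$ internal nodes per level, hence at most $n$ calls per level. That makes $O(nw)$ a bound that is uniform in both parameters. The paper's 64-bit benchmarks could not have terminated unless the implementation had something equivalent.

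Two small remarks. First, showing the guard is harmless needs no appeal to the correctness theorems: a range of length at most one is already sorted, and $singlePassBSort$ leaves it unchanged. Second, your option~(ii) is not a separate proof. Without a guard, a single non-empty call at depth $d$ with an empty child owns $2^{w-d-1}-1$ empty descendants, so charging empty calls to their deepest non-empty ancestor fails. If you prefer not to modify the algorithm, the accurate statement is $O(nw+2^w)$.
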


\begin{proof}
	The algorithm's execution time is governed by the bitwise partitioning of the input.
	\begin{itemize}
		\item Per-bit pass: For each bit $b \in \{0, ..., w - 1\}$, the algorithm performs a single partitioning pass (e.g., $singlePassBSort$). This pass involves a linear scan of the current partition. Each element in the partition is inspected exactly once to check the value of its $b$-th bit, resulting in $O(n_p)$ operations, where $n_p$ is the number of elements in the current sub-partition.
		\item Total work per level: At any depth $d$ of the recursion tree (corresponding to a specific bit), the sum of all elements across all partitions is exactly $n$. Therefore, the total work performed at each level of the recursion is $O(n)$.
		\item Total depth: Since the algorithm recurses on the bit-width of the data, the recursion tree has a maximum depth of $w$.
	\end{itemize}
	
	Summing the work across all $w$ levels:
	\begin{equation}
		T(n) = \sum_{b = 1}^{w} O(n) = O (nw)
	\end{equation}
	
	As $w$ is constant for a given data type (e.g., $w=64$ for doubles), the runtime asymptotic behavior is linear with respect to the number of elements $n$, though formally expressed as $O(nw)$ to account for bit-depth.
\end{proof}

\begin{theorem}
	\label{theorem:bsort-space}
	The bsort algorithm operates with $O(w)$ auxiliary space
\end{theorem}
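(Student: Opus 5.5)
The plan is to bound the auxiliary memory in two parts: the memory used inside a single invocation of a procedure (excluding recursive calls), and the maximum number of invocations that can be live on the call stack at the same time. The product of these two quantities gives the bound.

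\textbf{Step 1: each frame uses $O(1)$ extra words, and the sort is in place.}
\begin{itemize}
\item \emph{singlePassBSort} (Algorithm~1) keeps only the scalars $k$, $i$, $m$, $ps$, $pe$, $asc$ and one temporary used by $swap$. This is $O(1)$ words.
\item It permutes $a$ in place, so no element is ever copied into a secondary buffer.
\item \emph{binaryQuickSort}, \emph{bsortSigned}, \emph{bsortFloat} and \emph{bsortF} (Algorithms~2--5) each store only a constant number of scalars ($k$, $m$, $m'$ and the bounds), plus whatever a call to \emph{singlePassBSort} uses, which is already counted.
\end{itemize}
Hence each activation record occupies $O(1)$ words, independent of $n$.

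\textbf{Step 2: the call stack never holds more than $w+O(1)$ frames.}
Every recursive call passes the mask $m' = m \ggg 1$, so the single set bit of the mask moves one position lower per level of recursion. The paths through the call tree are:
\begin{itemize}
\item \emph{binaryQuickSort} stops when $m'=0$.
\item \emph{bsortF} hands over to \emph{binaryQuickSort} once $m = 1 \ll sizeof\_mantissa(a_0)$, and from there the mask keeps shifting down.
\item \emph{bsortSigned} and \emph{bsortFloat} start at the MSB, mask $1 \ll (w-1)$.
\end{itemize}
Along any root-to-leaf path, then, the mask takes each value $2^{w-1},\dots,2^0$ at most once. So the recursion depth is at most $w$, plus a constant for the top-level wrapper and the nested \emph{singlePassBSort} call. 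The second recursive call in each procedure begins only after the first has returned, so only one path of the recursion tree is live at any moment. This is the same structural fact used in Theorem~\ref{theorem:bsort-runtime}, but applied to depth instead of to width.

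\textbf{Step 3: combine.} Multiplying the two bounds gives $O(1)\cdot O(w)=O(w)$ auxiliary space. The input array is excluded from this count, consistent with calling the algorithm in-place.

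The main obstacle is stating the stack-depth argument rigorously across the mixed calls between \emph{bsortF} and \emph{binaryQuickSort}. I will handle it by an induction on the exponent $j$ of the current mask $2^j$, with the claim: a call whose mask is $2^j$ generates a stack of depth at most $j+1$ beneath it. The base case is $j=0$, where no further recursion occurs. The inductive step holds because every child call receives mask $2^{j-1}$, whichever procedure the child is.
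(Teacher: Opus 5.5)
Your decomposition is exactly the paper's argument: $O(1)$ words per activation record, times a live call stack of depth $w+O(1)$, with only one root-to-leaf path live at a time. The paper simply asserts the depth bound, so your route is right.

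The problem is the induction you propose for the ``main obstacle''. You claim that \emph{every} call with mask $2^j$ has at most $j+1$ frames beneath it, with base case ``$j=0$, no further recursion''. That claim is false for \emph{bsortF}. Unlike \emph{binaryQuickSort}, \emph{bsortF} has no $m'=0$ test. Its only exit from self-recursion is the exact equality $m = 1 \ll \mu$, where $\mu = sizeof\_mantissa(a_0)$. A \emph{bsortF} call with mask $2^j$ and $j<\mu$ (in particular $j=0$ whenever $\mu\ge 1$) shifts the mask down to $0$ and then calls itself with mask $0$ forever. So ``every child receives $2^{j-1}$, whichever procedure it is'' does not by itself give termination or a depth bound. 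What makes the \emph{bsortF} chain stop is that its mask actually reaches $2^{\mu}$.

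To repair this, restrict the claim to calls that really occur, and establish that first. Every \emph{bsortF} call reachable from \emph{bsortFloat} has mask $2^j$ with $j\ge\mu$. This holds because \emph{bsortFloat} starts it at $j=w-2\ge\mu$ (there is at least one exponent bit), and \emph{bsortF} calls itself with $j-1$ only when $j\ne\mu$, i.e.\ when $j>\mu$. Then induct on $j$ for the statement ``a \emph{binaryQuickSort} call with mask $2^j$, or a \emph{bsortF} call with mask $2^j$ and $j\ge\mu$, has at most $j+1$ frames beneath it''. In the case $j=\mu$, \emph{bsortF} hands off to \emph{binaryQuickSort} at $2^{\mu-1}$.

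A related point: Algorithm~2 uses the arithmetic shift $\gg$, not $\ggg$. Your claim that the set bit just moves down one position holds there only because every mask reaching \emph{binaryQuickSort} from \emph{bsortSigned} or \emph{bsortFloat} already has its top bit clear. For a direct call on unsigned data starting at $2^{w-1}$, the shift must be read as logical, otherwise the mask never reaches $0$. You should say this explicitly.
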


\begin{proof}
	This is established by two factors:
	
	\begin{itemize}
		\item In-place partitioning: The partitioning step (e.g., $singlePassBSort$) utilizes a two-pointer approach to swap elements within the original input array. This requires only a constant amount of auxiliary space, O(1), for temporary swap variables and pointers.
		\item Recursion depth: The algorithm performs a bitwise decomposition. For a word size of $w$ bits, the recursion tree has a maximum depth of $w$. Each stack frame stores a constant number of variables (pointers and bit masks). Therefore, the stack space consumed by recursive calls is $O(w)$.
	\end{itemize}
	
	As the algorithm does not allocate additional data structures proportional to the input size $n$, and the stack depth is bounded by $w$, the total auxiliary space complexity is $O(w)$. So, we can consider $bsort$ to be an in-place with respect to $n$ algorithm. 
\end{proof}

\subsection{Empirical analysis}

Introsort was selected as baseline for comparison given its status as the standard implementation in the C++ STL (std::sort). To evaluate \texttt{bsort} against modern state-of-the-art techniques, spreadsort and radix sort (ska\_sort) were included, both of which share a non-comparison-based architecture. While the standard C \texttt{qsort} implementation was initially considered, it was excluded from the final results due to consistently underperforming all other evaluated algorithms. This disparity arises primarily because \texttt{qsort} utilizes function pointers for comparisons, which inhibits the compiler from performing effective inlining and optimization, even at high optimization levels (\texttt{-O3})~\cite{effective-stl}.

The case of \texttt{qsort} serves as a salient example of the potential divergence between theoretical asymptotic behavior and practical performance; while its asymptotic bounds are well-established, architectural overheads and compiler limitations can significantly hinder its real-world efficiency.

A comparison of algorithms' properties can be found in Table~\ref{table:algorithm-comparison}.

\begin{table}[ht]
	\centering
	\caption{Comparative analysis of sorting algorithms.}
	\label{table:algorithm-comparison}
	\renewcommand{\arraystretch}{1.2}
	\small
	\begin{tabular}{|c|c|c|c|c|}
		\hline
		\textbf{Algorithm} & \textbf{Runtime } & \textbf{Memory$^+$ } & \textbf{In-place} & \textbf{Comp.}\\
		\hline
		Introsort$^*$ 		 & $O(n \log(n))$ 					& $O(\log(n))$ 	  & Yes  & Yes\\
		Spreadsort$^*$	  & $O(n (\frac{k}{s} + s))$	& $O(n)$  & No    & No\\
		ska\_sort$^*$		& $O(nw)$							  & $O(d \cdot k)$      	  & Yes   & No\\
		Quicksort				& $O(n \log(n))$				   & $O(\log(n))$	& Yes   & Yes\\
		bsort						& $O(nw)$							 & $O(w)$			 & Yes  & No\\
		\hline
	\end{tabular}
	\vspace{5pt}
	\begin{flushleft}
		The asterisk ($^*$) denotes hybrid algorithms. Consequently, their runtime, memory usage, and comparison-based behaviors vary depending on the specific subroutine executed.
	\end{flushleft}
\end{table}

$bsort$ as described in this paper, i.e. with no major modifications or optimizations, was implemented in C++ and was compared with the STL implementation of introsort (\texttt{std::sort}), Boost library implementation of spread sort (\texttt{boost::sort::spreadsort}), and an optimized implementation of radix sort (\texttt{ska\_sort}) in a 64-bit GNU/Linux environment with 48 GB of RAM and an Intel i5-8350U processor. The list of word size depending on the data type can be found in Table~\ref{table:word-size}. To mitigate the impact of external factors like transient system load, and to ensure statistical significance and consistency, GoogleBenchmark v1.8.0 was used. Compilation was done using gcc 12.3.1 with optimization flag \texttt{-O3}.

\begin{table}[htbp]
    \caption{Word size per data type user for testing}
    \begin{center}
        \begin{tabular}{|c|c|}
            \hline
            \textbf{Data type} & \textbf{Word size}\\
            \hline
            \texttt{char}        & 8\\
            \texttt{short}       & 16\\
            \texttt{int}         & 32\\
            \texttt{long long}   & 64\\
            \texttt{float}       & 32\\
            \texttt{double}      & 64\\
            \hline
        \end{tabular}
        \label{table:word-size}
    \end{center}
\end{table}

The analysis was performed on the data collected from various tests where each test case consisted of sorting an array of a certain size in $\{10^5, 10^6, 10^7, 10^8, 10^9, 5\cdot10^{9}\}$ and a certain data type in \{\texttt{char}, \texttt{short}, \texttt{int}, \texttt{long long}, \texttt{float}, \texttt{double}\}, using all four algorithms previously mentioned. The source code, including implementation, testing, and the \LaTeX sources for this manuscript, are publicly available at \url{https://benjaminguzman.dev}. Results comparing data type and running time are found in Figures~\ref{fig:comparison-by-ints} and~\ref{fig:comparison-by-floats}.

\begin{figure}[ht!]
	\centering
	\newcommand{\plotwidth}{0.95\columnwidth}
	\begin{subfigure}{\plotwidth}
		\includegraphics[width=\linewidth]{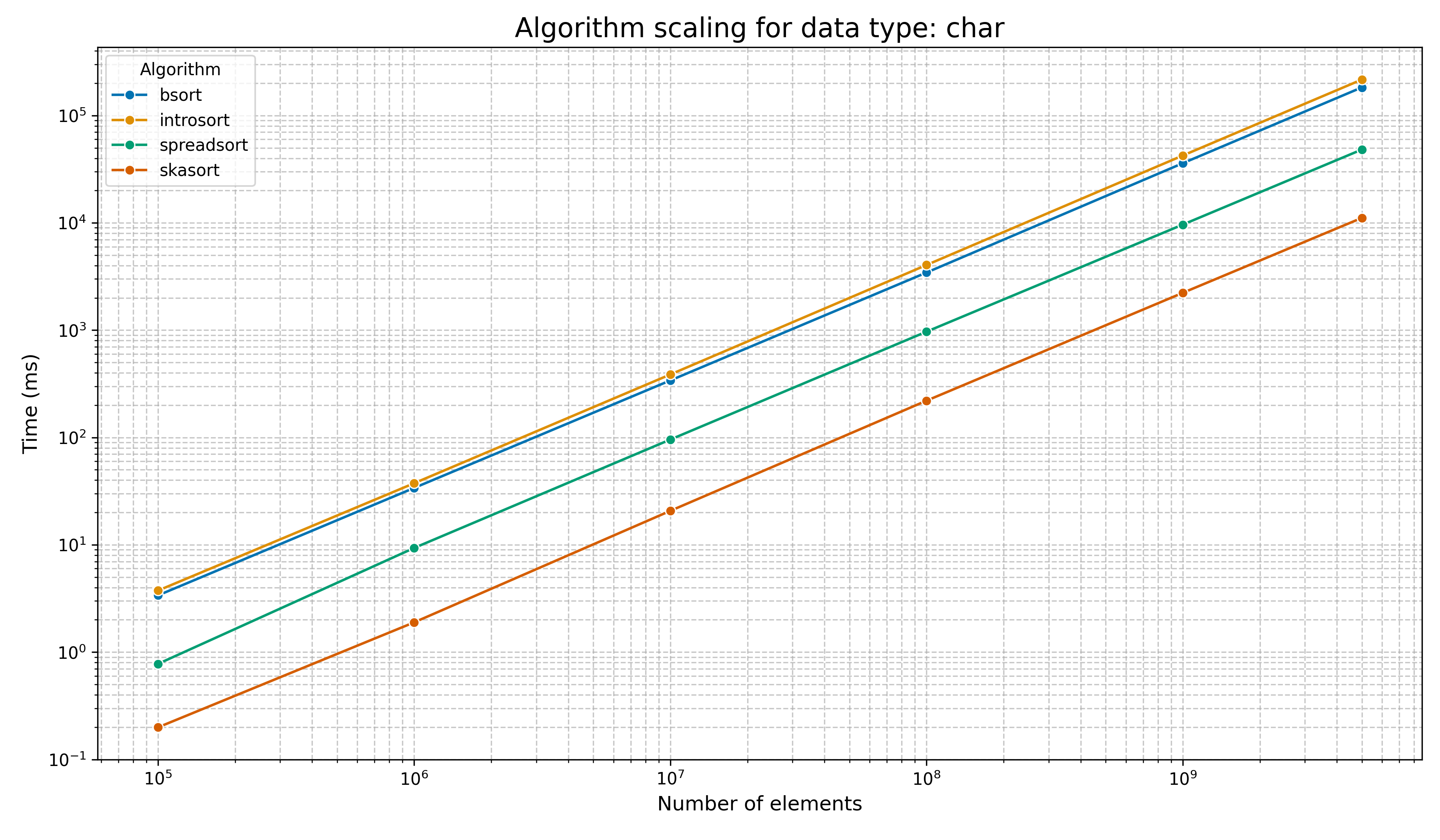}
	\end{subfigure}
	\begin{subfigure}{\plotwidth}
		\includegraphics[width=\linewidth]{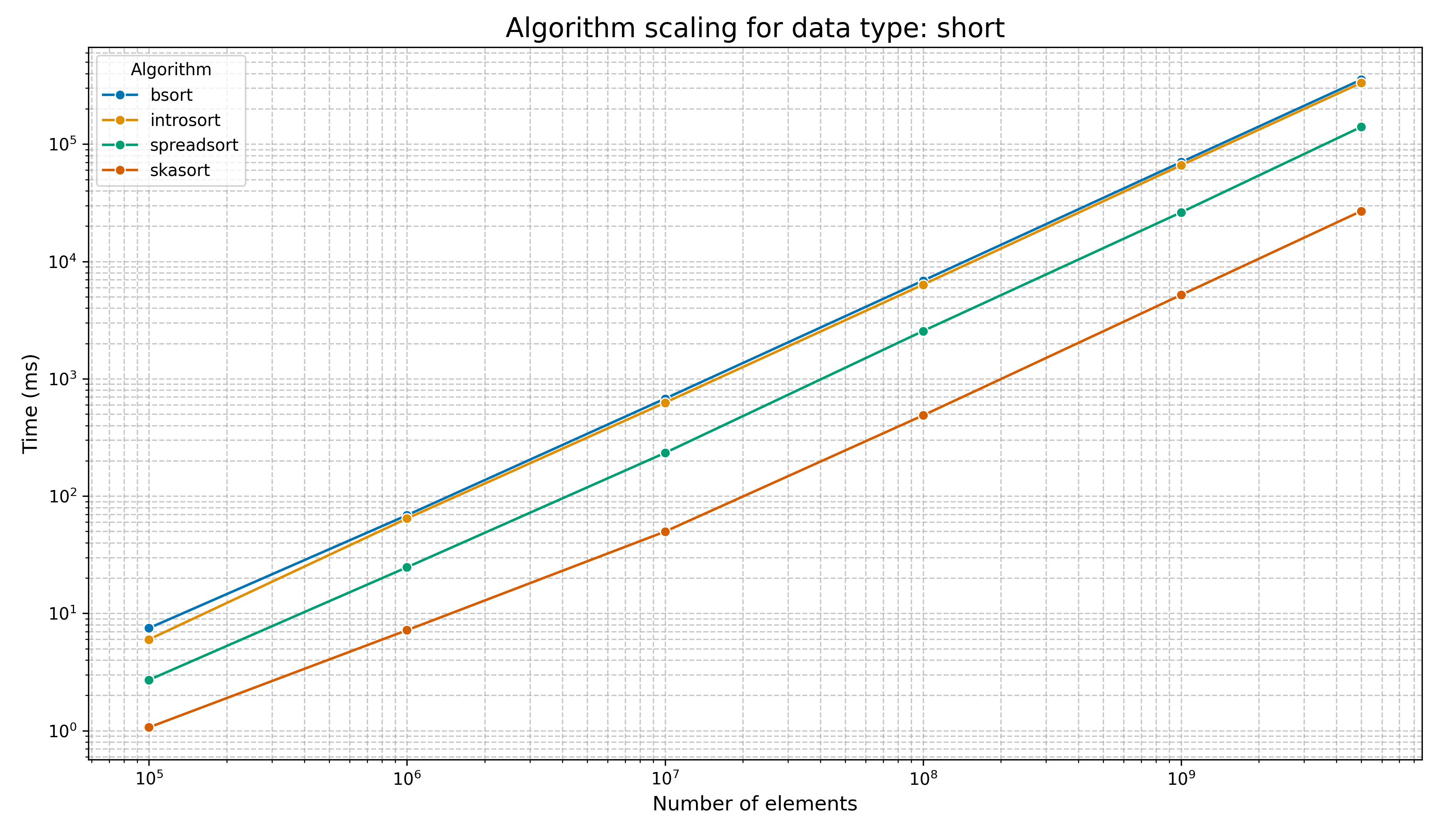}
	\end{subfigure}
	\begin{subfigure}{\plotwidth}
		\includegraphics[width=\linewidth]{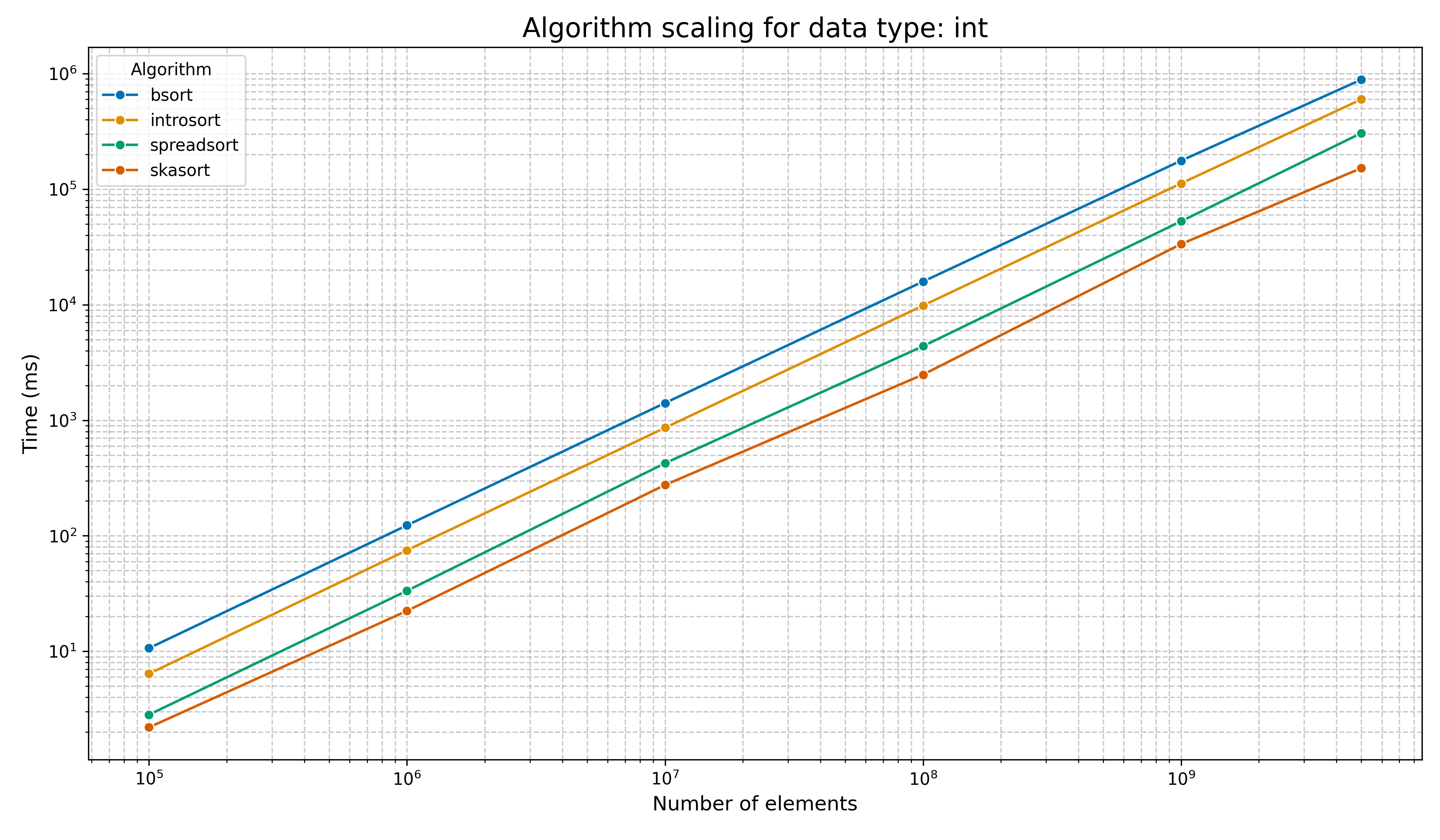}
	\end{subfigure}
	\begin{subfigure}{\plotwidth}
		\includegraphics[width=\linewidth]{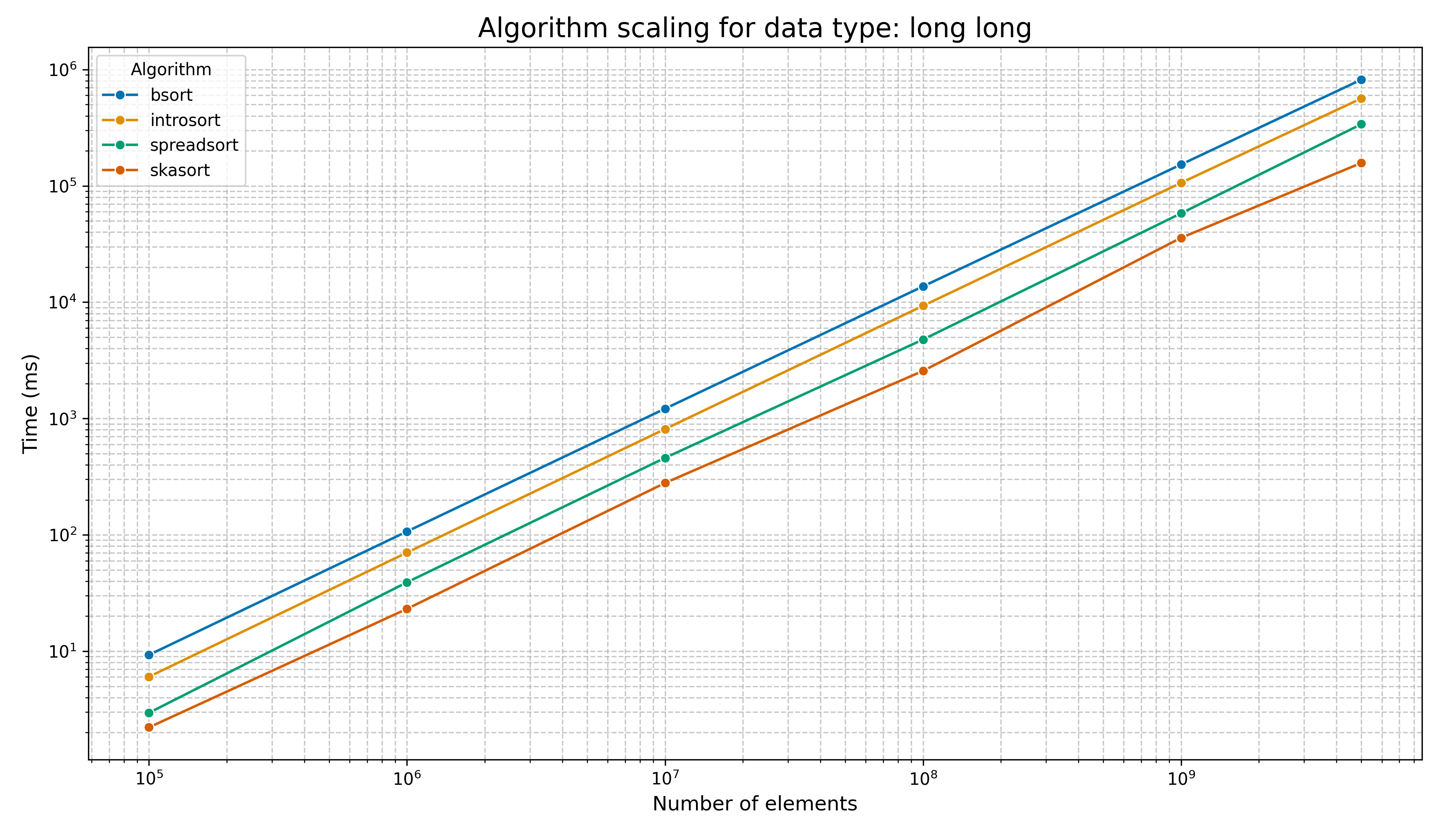}
	\end{subfigure}
	\caption{Performance comparison for integer types \texttt{char}, \texttt{short}, \texttt{int}, and \texttt{long long}.}
	\label{fig:comparison-by-ints}
\end{figure}

\begin{figure}[ht]
	\centering
	\newcommand{\plotwidth}{0.95\columnwidth}
	\begin{subfigure}{\plotwidth}
		\includegraphics[width=\linewidth]{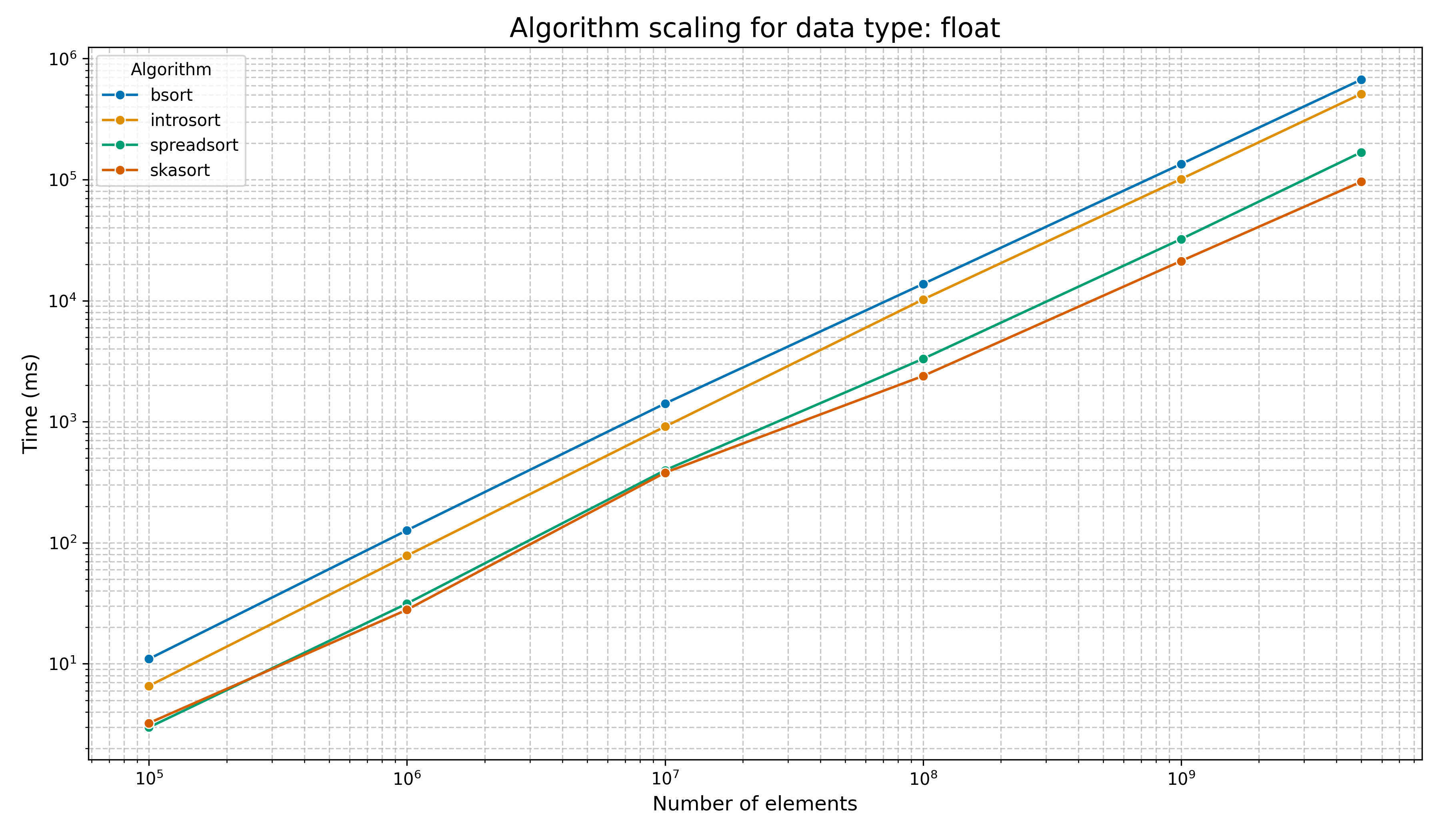}
	\end{subfigure}
	\begin{subfigure}{\plotwidth}
		\includegraphics[width=\linewidth]{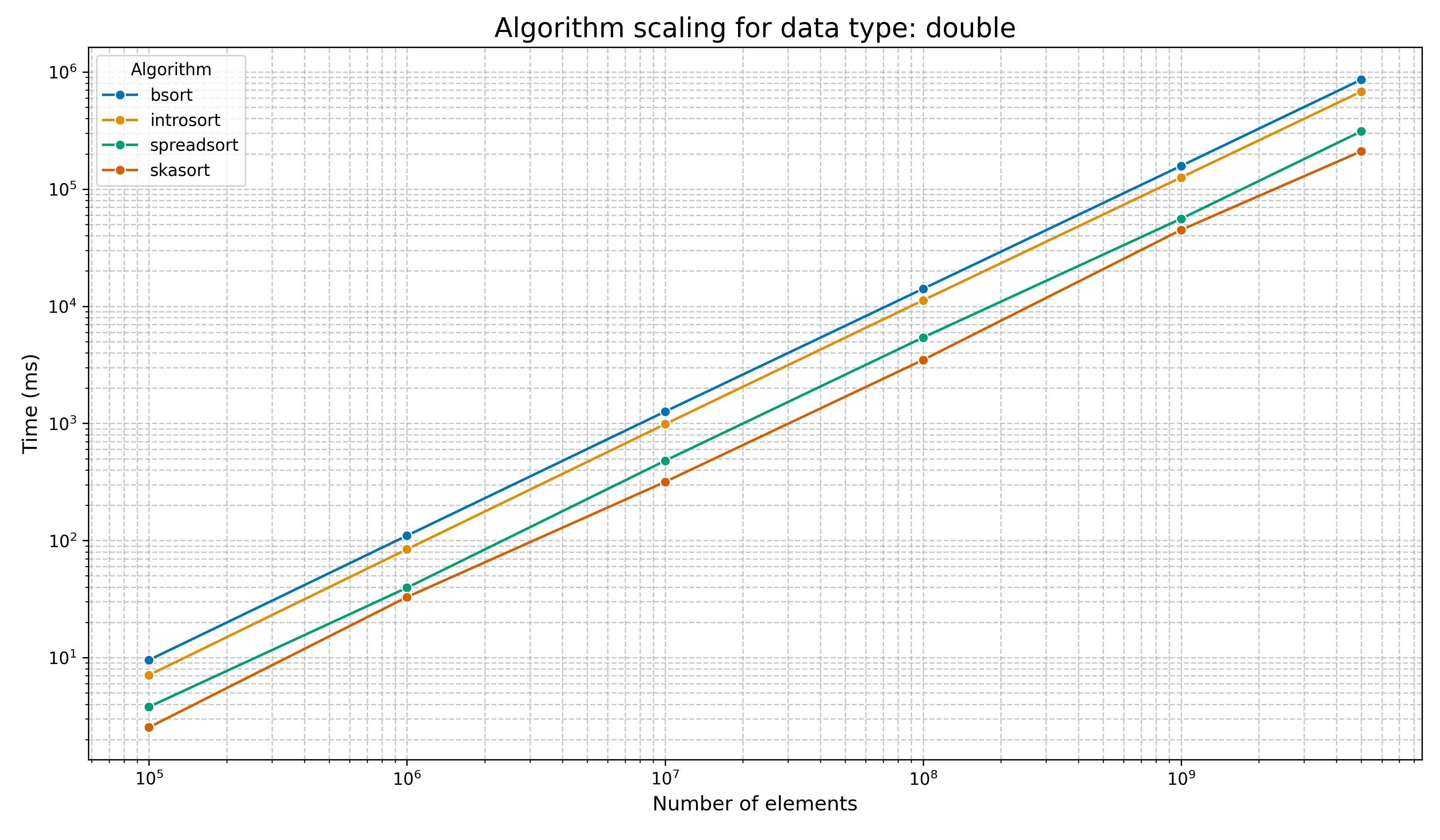}
	\end{subfigure}
	\caption{Performance comparison for floating-point types \texttt{float} and \texttt{double}.}
	\label{fig:comparison-by-floats}
\end{figure}

From Figures~\ref{fig:comparison-by-ints} and~\ref{fig:comparison-by-floats} it can be seen that $bsort$ demonstrates consistent linear scaling. As predicted by the analysis of its runtime in the previous section, the execution time grows linearly with the number of elements $n$. The performance gap between bsort and comparison-based introsort remains relatively stable as $n$ increases, confirming that the bitwise partitioning overhead does not degrade disproportionately for large datasets. 

Results also show that when the bit-depth is shallow, $bsort$ consistently outperforms introsort, but when it is deep, its relative performance decreases. This is a direct consequence of the $w$ factor, as more recursive passes are required to settle the final order.


Theoretically, bsort possesses an asymptotic advantage over comparison-based algorithms in specific regimes, as its runtime behavior is bound by $O(wn)$ and that of efficient comparison-based algorithms usually is $O(n \log (n) )$, the relative efficiency can be characterized by the ratio $\lim\limits_{n\rightarrow\infty} \frac{w n}{n log(n)} = 0$ and $\lim\limits_{w \rightarrow 0} \frac{w n}{n log(n)} = 0$. These limits suggest that bsort should strictly outperform any comparison-based algorithm as the array size n grows sufficiently large or the word size w becomes sufficiently small. Indeed, our results for the 8-bit \texttt{char} data type confirm this.

However, this theoretical superiority fails to materialize due to fundamental microarchitectural constraints. To isolate the specific bottlenecks impeding bsort's performance, the algorithm was profiled using Linux's \texttt{perf} tool on the same machine afore benchmarks were executed.

The empirical analysis identifies three primary causes for bsort's performance degradation: branch unpredictability, stack pollution and instructional volume. First, the algorithm relies on bitwise conditional checks inside the partitioning loop. On random data, these checks are inherently unpredictable (approaching a 50\% miss rate), causing frequent pipeline flushes. This is clearly shown on Figure~\ref{fig:branch-misses}.

Second, bsort employs a rigid recursive structure that imposes a hidden penalty: register pressure and stack pollution. Unlike hybrid algorithms that switch to iterative, low-overhead methods for small datasets, bsort incurs the full cost of recursive function calls down to the final bit. This repeated context switching forces frequent register spilling and consumes valuable L1 cache lines for stack management rather than payload data. Figure~\ref{fig:cache-misses} illustrates this inefficiency, showing bsort consistently incurs a high L1 D-cache miss rate across all data types.

\begin{figure}[ht]
	\centering
	\newcommand{\plotwidth}{0.95\columnwidth}
	\begin{subfigure}{\plotwidth}
		\includegraphics[width=\linewidth]{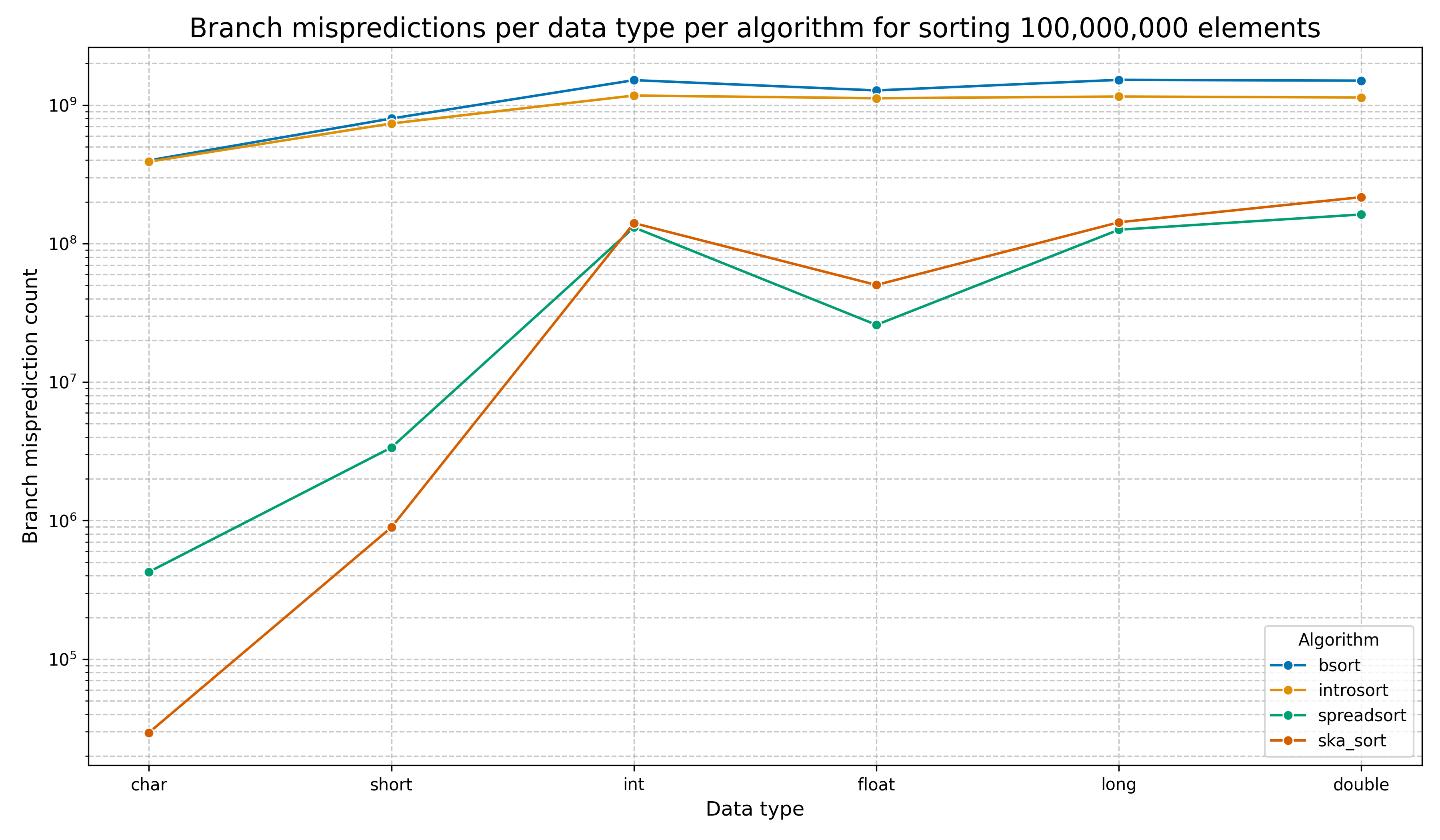}
	\end{subfigure}
	\caption{Comparison of branch misprediction count}
	\label{fig:branch-misses}
\end{figure}

\begin{figure}[ht]
	\centering
	\newcommand{\plotwidth}{0.95\columnwidth}
	\begin{subfigure}{\plotwidth}
		\includegraphics[width=\linewidth]{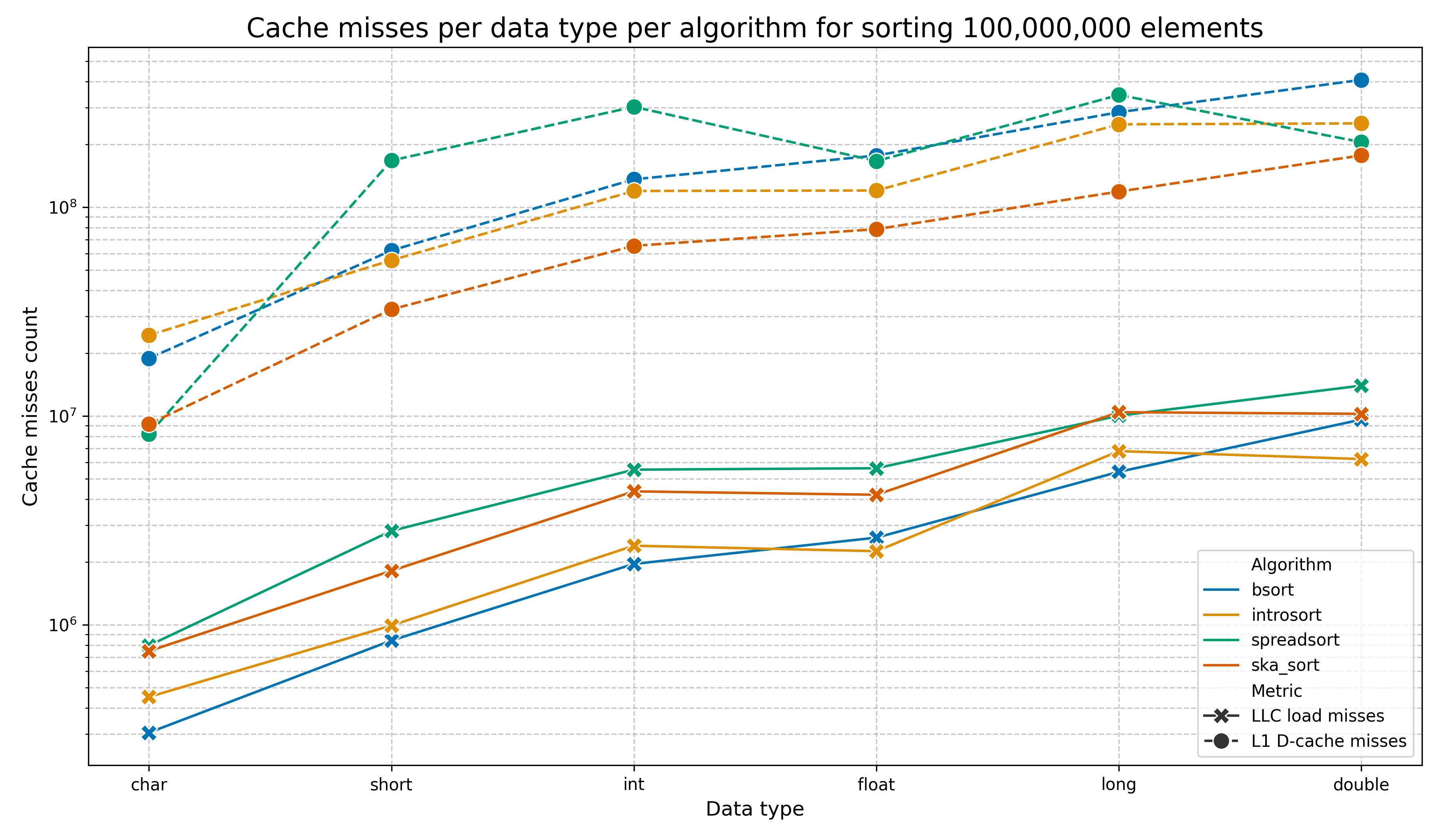}
	\end{subfigure}
	\caption{Comparison of cache misses count}
	\label{fig:cache-misses}
\end{figure}

The same recursive structure makes bsort traverse the whole array $w$ times, whereas other algorithms like introsort typically require only $O(log(n))$ times. Consequently, the sheer number of instructions required to complete the recursion overwhelms the processor as it executes significantly more instructions than other algorithms, e.g., 64 complete array scans for 64-bit integers in bsort vs. $\approx$ 26 for $10^8$ elements in introsort. This behavior is clearly shown in Figure~\ref{fig:cycle-count}.

\begin{figure}[ht]
	\centering
	\newcommand{\plotwidth}{0.95\columnwidth}
	\begin{subfigure}{\plotwidth}
		\includegraphics[width=\linewidth]{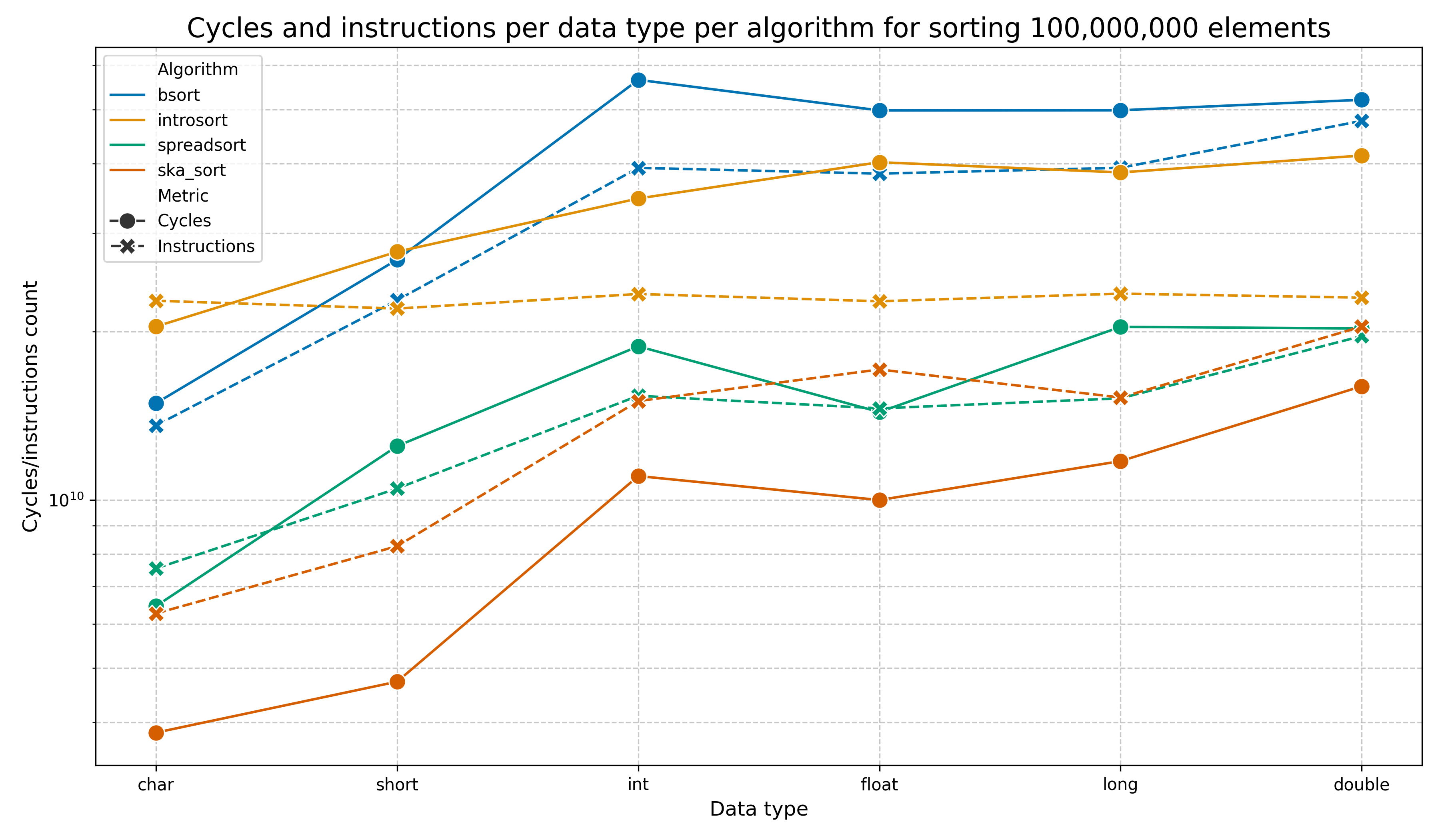}
	\end{subfigure}
	\caption{Comparison of CPU cycles and instructions count}
	\label{fig:cycle-count}
\end{figure}

The performance gap between bsort and other algorithms also stems from a fundamental difference in algorithmic architecture. As detailed in Table~\ref{table:algorithm-comparison}, state-of-the-art implementations like introsort, spreadsort, and ska\_sort are \textit{hybrid} algorithms; they adaptively switch strategies depending on partition size or data distribution to maximize hardware efficiency~\cite{introsort,skasort,spreadsort}. By contrast, \texttt{bsort} relies exclusively on the same bitwise partitioning procedure down to the finest granularity.

Remarkably, despite the lack of hybrid optimizations and the inherent instruction overhead, \texttt{bsort} demonstrates competitive performance. In shallow bit-depth regimes (e.g., 8-bit), it statistically outperforms the hybrid, cache-optimized introsort. This suggests that while the current monolithic implementation faces microarchitectural bottlenecks for large word sizes, the core bitwise mechanism possesses significant latent potential for future architectural optimization.

\section{Future enhancements}

While this work establishes the foundational mechanics and performance characteristics of $bsort$, several avenues for optimization remain to be explored. To address the performance gaps observed in larger word-size regimes, future iterations of the algorithm could be transformed into a hybrid implementation. By adaptively switching to a non-recursive, cache-friendly subroutine once partitions reach an optimal size, the $O(nw)$ recursion depth penalty could be significantly mitigated. Furthermore, performance could be enhanced through the integration of Single Instruction, Multiple Data (SIMD) instructions to parallelize bit-masking operations, branchless partitioning like conditional move instructions to prevent branch mispredictions, and exploiting Instruction-Level Parallelism (ILP) to overlap partitioning tasks.

These advanced architectural optimizations are currently outside the scope of this paper, particularly as the hardware utilized for this study does not provide native support for modern SIMD extensions. Consequently, the implementation and evaluation of these enhancements are reserved for subsequent research.

\section{Conclusion}

This paper presents $bsort$, an in-place, non-comparison-based sorting algorithm for signed and unsigned integers and floating-point values in $O(wn)$ time and $O(w)$ space. It was demonstrated through formal proof and experimental analysis that $bsort$ is particularly effective for small word-size regimes, such as 8-bit integers. However, it was also identified that for larger word sizes, the algorithm's performance is significantly hindered by its lack of a hybrid architecture and poor cache locality compared to state-of-the-art implementations. While $bsort$ provides a memory-efficient alternative for specific data types, future work is required to bridge the gap between its theoretical potential and practical high-performance execution.

\section*{Acknowledgment}

I want to express my deep gratitude to Kim García for her support and belief in me, without which many personal achievements, including this paper, would not have been possible.

\bibliographystyle{IEEEtran}
\bibliography{bsort.bib}

\end{document}